\newtheorem{theorem}{Theorem}
\newtheorem{lemma}[theorem]{Lemma}
\newtheorem{definition}[theorem]{Definition}
\newtheorem{corollary}[theorem]{Corollary}
\newtheorem{observation}[theorem]{Observation}
\setlist[enumerate]{nosep,topsep=0em}
\setlist[enumerate,1]{label=(\roman*), leftmargin=2.2em}
\setlist[enumerate,2]{label=(\alph*)}
\setlist[itemize]{nosep,topsep=0.1em}
\tikzset{
  treenode/.style = {shape=rectangle, rounded corners,
                     draw, anchor=center,
                     text width=6em, 
                     align=center,
                     top color=white, 
                     inner sep=0.5ex},
  decision/.style = {treenode, diamond, inner sep=-0.5pt,bottom color=blue!10},
  root/.style     = {treenode, bottom color=red!20},
  env/.style      = {treenode, bottom color=green!10},
  dummy/.style    = {circle,draw}
}
\newcommand{\labeltarget}[1]{\Hy@raisedlink{\hypertarget{#1}{}}}
\newcommand{\repthanks}[1]{\textsuperscript{\ref{#1}}}
\def\thanks}
\let\repthanks\repthanksunskip\def\thanks}
\def\thanks}
\let\repthanks\@gobble\def\thanks}
\newcommand\repthanksunskip[1]{\unskip{}}
\def\@fnsymbol#1{\ensuremath{\ifcase#1\or {\star} \or {\star} {\star} \or {\star} {\star} {\star} \or
 \else\@ctrerr\fi}}
\definecolor{darkblue}{rgb}{0,0,0.38}
\definecolor{darkred}{rgb}{0.6,0,0}
\definecolor{darkgreen}{rgb}{0.1,0.35,0}
\DeclareMathOperator{\conv}{conv}
\DeclareMathOperator{\poly}{poly}
\DeclareMathOperator{\VC}{VC}
\DeclareMathOperator{\VCthree}{VC3}
\renewcommand{\P}{\mathtt{P}}
\newcommand{\NP}{\mathtt{NP}}
\newcommand{\g}{\gamma}
\newcommand\gckc{{\ensuremath{\operatorname{\gamma C k C}}}\xspace}
\newcommand\gckcB{{\ensuremath{\operatorname{\mathbf{\bm{\gamma} C k C}}}}\xspace}
\newcommand{\p}{\mathcal{P}} %
\newcommand{\ip}{\mathcal{P}_{I}} %
\newcommand{\PLP}{\textrm{PLP}}
\newcommand{\DLP}{\textrm{DLP}}
\newcommand{\fp}{\p^{\alpha,\mu}}
\newcommand{\fip}{\ip^{\alpha,\mu}}
\newcommand{\q}{\mathcal{Q}}
\DeclareMathOperator*{\argmax}{\arg\!\max}
\begin{document}

\title{A Technique for Obtaining True Approximations for $k$-Center with Covering Constraints \thanks{This project received funding from the European Research Council (ERC) under the European Union's Horizon 2020 research and innovation programme (grant agreement No 817750) and the Swiss National Science Foundation grants 200021\_184622 and PZ00P2$\_$174117.\newline
\indent A preliminary version of this work was presented at the 21st Conference on Integer Programming and Combinatorial Optimization (IPCO 2020). An independent work of Jia, Sheth, and Svensson~\cite{DBLP:conf/ipco/JiaSS20}, presented at the same venue, gave a $3$-approximation for Colorful $k$-Center with constantly many colors using different techniques.
}
}

\author{Georg Anegg         \and
        Haris Angelidakis \and Adam Kurpisz \and
        Rico Zenklusen
}

\date{}

\maketitle

\begin{abstract}
There has been a recent surge of interest in incorporating fairness aspects into classical clustering problems. Two recently introduced variants of the $k$-Center problem in this spirit are Colorful $k$-Center, introduced by Bandyapadhyay, Inamdar, Pai, and Varadarajan, and lottery models, such as the Fair Robust $k$-Center problem introduced by Harris, Pensyl, Srinivasan, and Trinh. To address fairness aspects, these models, compared to traditional $k$-Center, include additional covering constraints. Prior approximation results for these models require to relax some of the normally hard constraints, like the number of centers to be opened or the involved covering constraints, and therefore, only obtain constant-factor pseudo-approximations. In this paper, we introduce a new approach to deal with such covering constraints that leads to (true) approximations, including a $4$-approximation for Colorful $k$-Center with constantly many colors---settling an open question raised by Bandyapadhyay, Inamdar, Pai, and Varadarajan---and a $4$-approximation for Fair Robust $k$-Center, for which the existence of a (true) constant-factor approximation was also open.

We complement our results by showing that if one allows an unbounded number of colors, then Colorful $k$-Center admits no approximation algorithm with finite approximation guarantee, assuming that $\P \neq \NP$. Moreover, under the Exponential Time Hypothesis, the problem is inapproximable if the number of colors grows faster than logarithmic in the size of the ground set.
\end{abstract}

\section{Introduction}\label{sec:intro}

Along with $k$-Median and $k$-Means, $k$-Center is one of the most fundamental and heavily studied clustering problems. In $k$-Center, we are given a finite metric space $(X,d)$ and an integer $k\in [|X|]\coloneqq \{1,\dots, |X|\}$, and the task is to find a set $C\subseteq X$ with $|C|\leq k$ minimizing the maximum distance of any point in $X$ to its closest point in $C$.
Equivalently, the problem can be phrased as covering $X$ with $k$ balls of radius as small as possible, i.e., finding the smallest radius $r\in \mathbb{R}_{\geq 0}$ together with a set $C\subseteq X$ with $|C|\leq k$ such that $X = B(C,r) \coloneqq \bigcup_{c\in C}B(c,r)$, where $B(c,r)\coloneqq \{u\in X: d(c,u)\leq r\}$ is the ball of radius $r$ around $c$.

$k$-Center, like most clustering problems, is computationally hard; actually it is $\NP$-hard to approximate to within any constant below $2$~\cite{DBLP:journals/dam/HsuN79}. On the positive side, various $2$-approximations~\cite{DBLP:journals/tcs/Gonzalez85,DBLP:journals/mor/HochbaumS85} have been found, and thus, its approximability is settled.
Many variations of $k$-Center have been studied, most of which are based on generalizations along one of the following two main axes:
\begin{enumerate}
\item\label{item:varConstrCenters} which sets of centers can be selected, and
\item\label{item:varOutliers} which sets of points of $X$ need to be covered.
\end{enumerate}
The most prominent variations along~\ref{item:varConstrCenters} are variations where the set of centers is required to be in some down-closed family $\mathcal{F}\subseteq 2^X$. For example, if centers have non-negative opening costs and there is a global budget for opening centers, Knapsack Center is obtained. If $\mathcal{F}$ is the set of independent sets of a matroid, the problem is known as Matroid Center. 
The best-known problem type linked to~\ref{item:varOutliers} is Robust $k$-Center. Here, an integer $m\in [|X|]$ is given, and one only needs to cover any $m$ points of $X$ with $k$ balls of radius as small as possible. Research on $k$-Center variants along one or both of these axes has been very active and fruitful, see, e.g.,~\cite{DBLP:journals/jacm/HochbaumS86,DBLP:conf/soda/CharikarKMN01,DBLP:journals/algorithmica/ChenLLW16,DBLP:conf/icalp/ChakrabartyGK16}. In particular, recent work of Chakrabarty and Negahbani~\cite{DBLP:journals/talg/ChakrabartyN19} presents an elegant and unifying framework for designing best possible approximation algorithms for all above-mentioned variants.

All the above variants have in common that there is a single covering requirement; either all of $X$ needs to be covered or a subset of it. Moreover, they come with different kinds of packing constraints on the centers to be opened as in Knapsack or Matroid Center.
However, the desire to address fairness in clustering, which has received significant attention recently, naturally leads to multiple covering constraints. Here, existing techniques only lead to constant-factor pseudo-approximations that violate at least one constraint, like the number of centers to be opened. In this work, we present techniques for obtaining (true) approximations for two recent fairness-inspired generalizations of $k$-Center along axis~\ref{item:varOutliers}, namely 
\begin{enumerate}[topsep=0.3em]
\item $\gamma$-Colorful $k$-Center, as introduced by Bandyapadhyay et al.~\cite{DBLP:conf/esa/Bandyapadhyay0P19}, and
\item Fair Robust $k$-Center, a lottery model introduced by Harris et al.~\cite{HarrisPST19}.
\end{enumerate}

\smallskip

\emph{$\gamma$-Colorful $k$-Center} (\gckc) is a fairness-inspired $k$-Center model imposing covering constraints on subgroups. It is formally defined as follows.\footnote{
The version introduced in~\cite{DBLP:conf/esa/Bandyapadhyay0P19} requires $X_1,\ldots, X_\gamma$ to partition $X$. However, \gckc readily reduces to the more restrictive model in~\cite{DBLP:conf/esa/Bandyapadhyay0P19} by introducing a new color $X_{\gamma + 1} = X \setminus \bigcup_{i \in [\gamma]}X_i$ with $m_{\gamma + 1} = 0$ and replacing each element that has $q > 1$ colors by $q$ elements on the same location with each having a single color.
}

\begin{definition}[$\gamma$-Colorful $k$-Center (\gckc)~\cite{DBLP:conf/esa/Bandyapadhyay0P19}]
Let $\gamma,k\in \mathbb{Z}_{\geq 1}$, $(X,d)$ be a finite metric space, $X_\ell \subseteq X$ for $\ell\in [\gamma]$, and $m \in \mathbb{Z}_{\geq 0}^{\gamma}$. The \emph{$\g$-Colorful $k$-Center problem (\gckc)} asks to find the smallest radius $r\in \mathbb{R}_{\geq 0}$ together with centers $C\subseteq X$, $|C|\leq k$, such that
\begin{equation*}
|B(C,r)\cap X_\ell| \geq m_\ell \quad \forall \ell\in [\gamma]\enspace.
\end{equation*}
Such a set of centers $C$ is called a \gckc solution of radius $r$.
\end{definition}

The name stems from interpreting each set $X_\ell$ for $\ell\in [\gamma]$ as a color assigned to the elements of $X_{\ell}$. In particular, an element can have multiple colors or no color. In words, the task is to open $k$ balls of smallest possible radius such that, for each color $\ell\in [\gamma]$, at least $m_\ell$ points of color $\ell$ are covered. Hence, for $\gamma=1$, we recover the Robust $k$-Center problem.

We briefly contrast \gckc with related fairness models. A related class of models that has received significant attention also assumes that the ground set is colored, but requires that each cluster contains approximately the same number of points from each color. Such variants have been considered for $k$-Median, $k$-Means, and $k$-Center, e.g., see~\cite{DBLP:conf/nips/Chierichetti0LV17,DBLP:conf/icalp/Rosner018,DBLP:conf/approx/Bercea0KKRS019,DBLP:conf/icml/BackursIOSVW19,DBLP:conf/nips/BeraCFN19} and references therein. \gckc differentiates itself from the above notion of fairness by not requiring a per-cluster guarantee, but a global fairness guarantee. More precisely, each color can be thought of as representing a certain group of people (demographic), and a global covering requirement is given per demographic. Also notice the difference with the well-known Robust $k$-Center problem, where a feasible solution might, potentially, completely ignore a certain subgroup, resulting in a heavily unfair treatment. \gckc addresses this issue.

\medskip

The presence of multiple covering constraints in \gckc, imposed by the colors, hinders the use of classical $k$-Center clustering techniques, which, as mentioned above, have mostly been developed for packing constraints on the centers to be opened. An elegant first step was done by Bandyapadhyay et al.~\cite{DBLP:conf/esa/Bandyapadhyay0P19}. They exploit sparsity of a well-chosen LP (in a similar spirit as in~\cite{HarrisPST19}) to obtain the following pseudo-approximation for $\gckc$: they efficiently compute a solution of twice the optimal radius by opening at most $k+\gamma-1$ centers. Hence, up to $\gamma-1$ more centers than allowed may have to be opened.
Moreover,~\cite{DBLP:conf/esa/Bandyapadhyay0P19} shows that in the Euclidean plane, a significantly more involved extension of this technique allows for obtaining a true  $(17+\varepsilon)$-approximation for $\gamma=O(1)$. Unfortunately, this approach is heavily problem-tailored and does not even extend to $3$-dimensional Euclidean spaces. This naturally leads to the main open question raised in~\cite{DBLP:conf/esa/Bandyapadhyay0P19}:
\begin{center}
\emph{Does $\gckc$ with $\gamma=O(1)$ admit an $O(1)$-approximation, for any finite metric?}
\end{center}
Here, we introduce a new approach that answers this question affirmatively.

\medskip

Together with additional ingredients, our approach also applies to Fair Robust $k$-Center, which is a natural lottery model introduced by Harris et al.~\cite{HarrisPST19}. We introduce the following generalization thereof that can be handled with our techniques, which we name \emph{Fair $\gamma$-Colorful $k$-Center problem (Fair~\gckc)}. (The Fair Robust $k$-Center problem, as introduced in~\cite{HarrisPST19}, corresponds to $\gamma=1$.)
\begin{definition}[Fair $\gamma$-Colorful $k$-Center (Fair~$\gckc$)]\label{def:fair-colorful}
Given is a \gckc instance on a finite metric space $(X,d)$ together with a vector $p\in [0,1]^X$. The goal is to find the smallest radius $r\in\mathbb{R}_{\geq 0}$, for which there exists a distribution $\mathcal{H}$ over feasible $\gckc$ solutions of radius $r$ such that
\begin{equation*}
\Pr_{C\sim\mathcal{H}} [u \in B(C,r)] \geq p(u) \quad \forall u\in X\enspace.
\end{equation*}
An algorithm for this problem should return a radius $r$ along with an efficient procedure for sampling a random feasible $\gckc$ solution of radius $r$.
\end{definition}
We note that if there exists a distribution $\mathcal{H}$ with the desired properties for some radius $r$, then there exists a distribution of polynomial support with the desired properties (due to sparsity of the natural LP corresponding to the distribution, described in Section~\ref{sec:lottery}). This, in particular, implies that the corresponding decision problem is in $\NP$.

Fair~\gckc is a generalization of \gckc, where each element $u\in X$ needs to be covered with a prescribed probability $p(u)$. The Fair Robust $k$-Center problem, i.e., Fair~\gckc with $\gamma=1$, is indeed a fairness-inspired generalization of Robust $k$-Center, since Robust $k$-Center is obtained by setting $p(u)=0$ for $u\in X$. One example setting where the additional fairness aspect of Fair \gckc compared to \gckc is nicely illustrated, is when $k$-Center problems have to be solved repeatedly on the same metric space. The introduction of the probability requirements $p$ allows for obtaining a distribution to draw from that needs to consider all elements of $X$ (as prescribed by $p$), whereas classical Robust $k$-Center likely ignores a group of badly-placed elements. We refer to Harris et al.~\cite{HarrisPST19} for further motivation of the problem setting. They also discuss the Knapsack and Matroid Center problem under the same notion of fairness.

For Fair Robust $k$-Center, \cite{HarrisPST19}~presents a $2$-pseudo-approximation that slightly violates both the number of points to be covered and the probability of covering each point. More precisely, for any constant $\varepsilon >0$, only a $(1-\varepsilon)$-fraction of the required number of elements are covered, and element $u\in X$ is covered only with probability $(1-\varepsilon) p(u)$ instead of $p(u)$.
It was left open in~\cite{HarrisPST19} whether a true approximation may exist for Fair Robust $k$-Center.

\subsection{Our results} 

Our main contribution is a method to obtain $4$-approximations for variants of $k$-Center with unary encoded covering constraints on the points to be covered. We illustrate our technique in the context of $\gckc$, affirmatively resolving the open question of Bandyapadhyay et al.~\cite{DBLP:conf/esa/Bandyapadhyay0P19} about the existence of an $O(1)$-approximation for constantly many colors (without restrictions on the underlying metric space).
\begin{theorem}\label{thm:mainGckc}
There is a $4$-approximation algorithm for $\gckc$ running in time $|X|^{O(\gamma)}$.
\end{theorem}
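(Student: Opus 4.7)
The plan combines the standard $k$-Center technique of guessing the optimal radius with an enumeration over $\gamma$ ``witness'' points, one per color, which allows us to avoid the extra $\gamma - 1$ centers opened by the pseudo-approximation of Bandyapadhyay et al. First, I would enumerate $r^* \in \{d(u,v) : u,v \in X\}$ as a guess for $\OPT$, giving $O(|X|^2)$ candidates. For the correct guess, greedily pick a set $V \subseteq X$ of maximally-separated points at pairwise distance $> 2r^*$; every optimal center lies within $2r^*$ of some $v \in V$, so restricting candidate centers to $V$ at radius $4r^*$ loses at most a factor of two in radius. The natural LP, with variables $x_v$ (open center at $v \in V$) and $y_u$ (``coverage indicator'' of $u \in X$), then reads
\begin{align*}
\sum_{v\in V}x_v \le k, \quad y_u \le \sum_{v\in V\cap B(u,2r^*)} x_v \ \ \forall u \in X, \quad \sum_{u\in X_\ell} y_u \ge m_\ell \ \ \forall \ell\in [\gamma], \quad x,y \ge 0.
\end{align*}
By the sparsity argument of Bandyapadhyay et al., a basic feasible solution has at most $\gamma - 1$ fractional $x$-variables, leading to their pseudo-approximation of radius $2r^*$ with $k+\gamma-1$ centers.

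The new ingredient is to pay an additional factor of two in the radius in exchange for eliminating those extra $\gamma - 1$ centers. I would enumerate over all $\gamma$-tuples $(w_1,\ldots,w_\gamma) \in V^\gamma$ of ``witnesses'', intended so that for the correct guess $w_\ell$ is a candidate center close to an optimal center responsible for color $\ell$; this costs $|V|^\gamma = |X|^{O(\gamma)}$ time. Strengthen the LP by imposing $x_{w_\ell}=1$ for every $\ell$, and reduce the remaining color coverage requirements accordingly. If the residual LP is infeasible, the guess is discarded; otherwise, I compute a basic feasible solution, which still has $O(\gamma)$ fractional $x$-variables.

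The rounding step rests on the following structural claim, which I view as the technical core: for the correct witness tuple, every fractional $x$-variable in a basic feasible solution of the residual LP lies within distance $2r^*$ of some $w_\ell$. Assuming this, I would collapse each such fractional variable onto the nearest $w_\ell$ and open only the witnesses together with the integrally-selected non-witness centers, keeping the total within $k$. A point LP-covered at radius $2r^*$ by a fractional center inside $B(w_\ell, 2r^*)$ then lies within $4r^*$ of $w_\ell$ by the triangle inequality, yielding the claimed factor of $4$; integer non-witness opened centers contribute only radius $2r^*$. Returning the smallest $r^*$ for which some guess succeeds gives a $4$-approximation.

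The main obstacle is proving the structural claim that all residual fractional mass is close to some $w_\ell$. I expect this to follow from a careful analysis combining the LP sparsity bound with the fact that the witnesses are chosen so that every color requirement is anchored to some $w_\ell$, via an uncrossing or primal-dual argument on the tight constraints that define the basic solution. A secondary difficulty is ensuring that collapsing fractional centers onto witnesses does not under-cover any color, which calls for tracking, on a per-color basis, which fractional mass is responsible for which units of $m_\ell$; I would expect this bookkeeping to be handled by a bipartite flow/matching argument between the fractional $x$-support and the residual per-color demands.
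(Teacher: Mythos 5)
Your proposal does not establish the theorem, and the gap sits exactly where you flag it yourself: the structural claim that, in a basic feasible solution of the residual LP, every fractional $x$-variable lies within distance $2r^*$ of some witness $w_\ell$. As stated, this claim is not just unproven but essentially vacuous given your own construction. You build $V$ as a maximal set of points at pairwise distance strictly greater than $2r^*$, and your $x$-variables are indexed by $V$, with the witnesses $w_\ell \in V$ forced to value $1$. Hence any $v \in V$ with $v \neq w_\ell$ is at distance strictly greater than $2r^*$ from $w_\ell$, so it cannot be "within $2r^*$" of any witness. Your claim therefore reduces to the assertion that the basic solution has no fractional $x$-variables at all, i.e.\ that it is integral --- which is precisely what fails in general and is the obstruction that causes the $k+\gamma-1$ pseudo-approximation of Bandyapadhyay et al.\ in the first place. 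Nothing in your argument prevents the scenario where a color's optimal coverage is split across several optimal centers so that, even after pinning one witness per color, the LP retains nontrivial fractional mass far from every witness. Your "secondary difficulty" (per-color bookkeeping) never gets addressed because the rounding step it serves is not established.

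The paper's proof takes a genuinely different route that sidesteps the need for any such structural claim. It does not attempt to round one LP in a single shot; it runs a round-or-cut loop over the integer hull $\ip$ of the canonical relaxation~\eqref{eq:defP}. Given a candidate $(x,y)\in\p$, it first builds a greedy partition $(S,\mathcal{D})$ via Algorithm~\ref{alg:partition} with separation strictly greater than $4r$ (not $2r$); this larger separation makes the radius-$2r$ balls around $S$ pairwise disjoint, which is what enables the dynamic program later. The argument then branches on $y(B(S,r))$. If $y(B(S,r)) \le k-\gamma+1$, the sparsity argument of~\cite{DBLP:conf/esa/Bandyapadhyay0P19} (Theorem~\ref{lemma:small-y-sum}) already produces a radius-$4r$ solution using only centers in $S$. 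Otherwise, Lemma~\ref{lemma:DP} runs a DP (with guessing of at most $\gamma-2$ centers outside $S$) that either finds a radius-$2r$ solution, or, by the contrapositive of Lemma~\ref{lemma:separate}, certifies that $y(B(S,r)) \le k-\gamma+1$ is a valid inequality for $\ip$ violated by the current $(x,y)$. That violated inequality is handed to the ellipsoid method and the whole process repeats. This "cut" branch is exactly what your one-shot scheme is missing: when the LP point is not nice enough to round, the paper does not try harder to round it --- it refutes the point and moves on, and the ellipsoid method guarantees termination.
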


\noindent \textbf{Note:} In an independent work, Jia, Sheth, and Svensson~\cite{DBLP:conf/ipco/JiaSS20}, using different techniques, gave a $3$-approximation algorithm for $\gckc$ running in time $|X|^{O(\gamma^2)}$.

In a second step we extend and generalize our technique to Fair~\gckc, which, as mentioned, is a generalization of \gckc. We show that Fair \gckc admits an $O(1)$-approximation, which neither violates covering nor probabilistic constraints.
\begin{theorem}\label{thm:mainFairGckc}
There is a $4$-approximation algorithm for Fair~\gckc running in time $\poly(L) \cdot |X|^{O(\gamma)}$, where $L$ is the encoding length of the input.
\end{theorem}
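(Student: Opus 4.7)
The plan is to lift the deterministic $4$-approximation of Theorem~\ref{thm:mainGckc} to one producing a distribution over feasible \gckc solutions meeting the marginal constraints $p(u)$. I would begin by binary searching the optimal radius $r^*$ over the $O(|X|^2)$ distinct pairwise distances. Fixing $r^*$, consider the LP with variables $z_C \geq 0$, one per feasible \gckc solution $C$ of radius $r^*$, and constraints $\sum_C z_C = 1$ together with $\sum_{C:\,u \in B(C,r^*)} z_C \geq p(u)$ for every $u \in X$. Feasibility of this LP is equivalent to existence of a Fair~\gckc distribution at radius $r^*$.

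Although this LP has exponentially many variables, its dual has only $|X|+1$ variables, so I would attack the primal by running the ellipsoid method on the dual. The dual separation problem becomes: given nonnegative weights $\beta \in \mathbb{R}^X_{\geq 0}$ and a scalar $\alpha$, find a feasible \gckc solution $C$ of radius $r^*$ whose weighted coverage $\sum_{u \in B(C,r^*)} \beta_u$ exceeds $\alpha$, if one exists. This is a maximum-weighted-coverage variant of \gckc; I would extend the algorithm underlying Theorem~\ref{thm:mainGckc} to it by replacing unit counts with the weights $\beta$ in the internal LP and rounding, preserving both the $|X|^{O(\gamma)}$ running time and the factor $4$.

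Given such a $4$-approximate separation oracle, a standard Carr--Vempala-style argument delivers the result: whenever the oracle returns a solution $C$ of radius $r^*$ whose weighted coverage exceeds $\alpha$, we have a genuine violated dual cut; otherwise the oracle certifies that the primal becomes feasible once the radius is inflated from $r^*$ to $4r^*$. Gathering the polynomially many \gckc solutions encountered during the ellipsoid run and solving a compact LP over them produces a distribution of polynomial support over feasible \gckc solutions of radius $4r^*$ satisfying every constraint $p(u)$. Sampling from this distribution is the promised efficient procedure, and the total running time matches $\poly(L)\cdot|X|^{O(\gamma)}$.

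The main obstacle is adapting the rounding from Theorem~\ref{thm:mainGckc} to weighted coverage in such a way that the factor $4$ shows up exclusively as a radius blow-up and never as a violation of the hard constraints ($k$ centers or the color counts $m_\ell$). This is precisely why Theorem~\ref{thm:mainGckc} is required as a \emph{true} approximation rather than a pseudo-approximation: any slack in $k$ or in the $m_\ell$ at the oracle level would propagate through the ellipsoid machinery into either the probabilities $p(u)$ or the coverage counts of the final distribution, breaking the approximation guarantee.
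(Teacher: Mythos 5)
Your proposal follows essentially the same route as the paper: fix a radius, formulate the primal LP over distributions, dualize, run the ellipsoid method on the (polyhedron associated with the) dual, and realize the dual separation oracle as a weighted variant of \gckc by extending the round-or-cut machinery of Theorem~\ref{thm:mainGckc}, so that separation fails only when the primal is infeasible at radius $r$ while the gathered cuts certify feasibility at radius $4r$; a compact LP over the polynomially many solutions encountered then yields the distribution. The only small imprecision is the remark that the oracle returns a solution of radius $r^*$ — it returns one of radius $4r^*$ (this is exactly where the approximation factor enters) — but you correct this implicitly two sentences later, and your observation that the factor 4 must live entirely in the radius (never in $k$ or the $m_\ell$) is precisely the crux that the paper's Lemmas~\ref{lemma:small-y-sum_lottery}--\ref{lemma:DP_lottery} are designed to ensure.
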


We complete our results by showing inapproximability of \gckc when $\gamma$ is not bounded. This holds even on the real line ($1$-dimensional Euclidean space).

\begin{theorem}\label{thm:mainHardness}
It is $\NP$-hard to decide whether \gckc on the real line admits a solution of radius $0$. Moreover, unless the Exponential Time Hypothesis fails, for any function $f:\mathbb{Z}_{> 0}\to \mathbb{Z}_{\geq 0}$ with $f(n) = \omega(\log n)$, no polynomial-time algorithm can distinguish whether \gckc on the real line with $\gamma \leq f(|X|)$ admits a solution of radius $0$.
\end{theorem}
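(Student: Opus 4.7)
The plan is to establish both claims by a single combinatorial reduction from 3SAT combined with a padding argument for the ETH-based claim. The central observation is that at radius $r=0$ the balls $B(c,0)=\{c\}$ reduce to singletons, so the geometry of the real line plays no role; the problem becomes the purely combinatorial multicover selection problem of choosing at most $k$ points of $X$ such that at least $m_\ell$ of them lie in $X_\ell$ for every $\ell\in[\gamma]$.

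First, I would reduce 3SAT to the radius-$0$ decision problem for \gckc on the line. Given a 3SAT formula $\varphi$ with variables $x_1,\ldots,x_n$ and clauses $C_1,\ldots,C_m$, place $2n$ distinct points on the real line, one pair $\{x_i^{T},x_i^{F}\}$ per variable. Introduce $n+m$ colors: variable colors $V_i=\{x_i^{T},x_i^{F}\}$ with demand $m_{V_i}=1$, and clause colors $W_j$ containing $x_i^{T}$ if the literal $x_i$ occurs in $C_j$ and $x_i^{F}$ if $\neg x_i$ occurs in $C_j$, with demand $m_{W_j}=1$. Set $k=n$. Any radius-$0$ solution $C$ with $|C|\leq n$ must contain at least one element from each $V_i$, hence exactly one, so $C$ encodes a truth assignment of $\varphi$; the clause-color demands are met if and only if every clause is satisfied by this assignment. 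Thus $\varphi$ is satisfiable iff the constructed instance admits a radius-$0$ solution, giving NP-hardness.

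For the ETH-based statement, I would pad the construction with $N-2n$ colorless dummy points at fresh positions on the line. Since dummies contribute to no color class, including one in $C$ merely wastes a center and does not alter whether the color demands can be satisfied. The padded instance has $|X|=N$ and $\gamma=n+m$; after applying the Sparsification Lemma to assume $m=O(n)$, we obtain $\gamma\leq Cn$ for an absolute constant $C$. The hypothesis $f(n)=\omega(\log n)$ is equivalent to $f^{-1}(\gamma)=2^{o(\gamma)}$, so one can choose $N=2^{o(n)}$ with $f(N)\geq Cn$, and hence the constraint $\gamma\leq f(|X|)$ is satisfied. A putative polynomial-time algorithm for \gckc on the line under $\gamma\leq f(|X|)$ would then decide satisfiability of $\varphi$ in time $N^{O(1)}=2^{o(n)}$, contradicting ETH.

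The one step requiring care is the padding estimate: one must verify the elementary equivalence between $f(n)=\omega(\log n)$ and $f^{-1}(\gamma)=2^{o(\gamma)}$, and that an admissible $N$ can be located efficiently (for instance by doubling search, assuming $f$ is polynomial-time computable). Beyond this no real obstacle is expected, since the equivalence between radius-$0$ \gckc and the combinatorial multi-color selection problem is immediate from the definitions.
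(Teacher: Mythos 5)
Your reduction is correct and takes a genuinely different route from the paper, which reduces from Vertex Cover on graphs of maximum degree $3$ rather than from 3SAT. At radius $0$ the paper encodes vertices as points and each edge as a size-$2$ color class with demand $1$, so a radius-$0$ solution with $k=t$ centers is exactly a vertex cover of size $t$; the degree-$3$ bound automatically gives $\gamma = |E| \leq \tfrac{3}{2}|V|$, so $\gamma$ is linear in the hardness parameter without any extra step, and the ETH-based part follows from the Cai--Juedes result that $\VCthree(G,t)$ has no $2^{o(t)}\poly(n)$ algorithm. Your 3SAT gadget (a disjoint pair per variable with a demand-$1$ color, plus a demand-$1$ color per clause) is a valid alternative and the equivalence with satisfiability is clean, but you then \emph{need} the Sparsification Lemma to bring $\gamma = n+m$ down to $O(n)$ before the padding argument can give a running time of $2^{o(n)}$ rather than $2^{o(n^3)}$. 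That is an extra moving part that the paper's choice of source problem avoids. Your forward implication $f(n)=\omega(\log n)\Rightarrow$ ``padding size $2^{o(\gamma)}$ suffices'' is the right one and is what both proofs actually use; calling it an equivalence is slightly loose if $f$ is not monotone, but the usable direction holds.

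The one substantive gap --- which you flagged yourself --- is the reliance on $f$ being polynomial-time computable in order to locate an admissible padding size by doubling search. The theorem as stated quantifies over \emph{arbitrary} $f$ with $f(n)=\omega(\log n)$. The paper does not assume computability of $f$; instead it observes that the target padded size is always either $|X|$ or a power of two between $|X|$ and $2^{\gamma}$, so there are only $O(\gamma)$ candidate sizes, and it handles the unknown correct size by running the hypothetical algorithm $\mathcal{A}$ on all of these blown-up instances in parallel and stopping with the first termination. If you want to prove the theorem exactly as stated you should replace your doubling search with this enumeration over the $O(\gamma)$ powers-of-two candidate sizes, since that does not require evaluating $f$. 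Apart from this point, your argument is sound and parameter-compatible with ETH.
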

Hence, assuming the Exponential Time Hypothesis, there is no polynomial-time approximation algorithm for \gckc if the number of colors grows faster than logarithmic in the size of the ground set. Notice that, for a logarithmic number of colors, our procedures run in quasi-polynomial time.

Finally, we extend the hardness implied by Theorem~\ref{thm:mainHardness} to bi-criteria algorithms. An $(\alpha,\beta)$ bi-criteria algorithm for \gckc, for $\alpha, \beta \geq 1$, is an algorithm that returns a solution that picks at most $\alpha k$ centers, and its radius is at most $\beta r$, where $r$ is the radius of an optimal solution with $k$ centers. More precisely, we prove the following theorem.
\begin{theorem}\label{thm:bicriteria-hardness}
There exists a constant $c > 0$,
such that it is $\NP$-hard to decide whether \gckc on the real line admits a solution of radius $0$, even if we are allowed to violate the number of centers to be opened by a factor of $c \log |X|$.
\end{theorem}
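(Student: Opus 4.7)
The plan is to reduce from Set Cover, which by the results of Raz--Safra (and tightened by Dinur--Steurer) is $\NP$-hard to approximate within $c' \log n$ for some constant $c' > 0$, where $n$ denotes the universe size. The guiding observation is that at radius $0$ on the real line, where all ground-set points can be placed at distinct locations, one has $B(C, 0) = C$, so a feasible radius-$0$ \gckc solution corresponds exactly to a subset of $X$ that by itself satisfies every covering constraint.

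Given a Set Cover instance $(U, \{S_1, \ldots, S_m\})$ with $|U| = n$ and budget $k$, I would construct a \gckc instance on $\mathbb{R}$ by placing $m$ points $x_1, \ldots, x_m$ at distinct real locations (so $|X| = m$), introducing one color $\ell_e$ per element $e \in U$ with covering requirement $m_{\ell_e} = 1$, and giving $x_j$ color $\ell_e$ iff $e \in S_j$. Under these distances and radius $0$, a set $C \subseteq X$ of size $s$ is feasible iff $\{S_j : x_j \in C\}$ is a Set Cover of $U$ of size $s$. Plugging in $s = k$ on the yes-side and $s = c k \log |X|$ on the no-side turns the bi-criteria decision question into exactly the Set Cover approximation question.

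It then remains to translate the $\Omega(\log n)$ Set Cover inapproximability into $\Omega(\log |X|)$ inapproximability of the number of centers. The standard PCP-based reductions producing Set Cover hardness yield instances with $m = n^{O(1)}$, hence $\log |X| = \log m = \Theta(\log n)$; choosing $c > 0$ small enough relative to $c'$ and this polynomial blow-up, any polynomial-time algorithm deciding radius-$0$ feasibility while allowed to open $c k \log |X|$ centers would yield a $(c' \log n)$-approximation for Set Cover, contradicting $\P \ne \NP$. The only real point of care is verifying this polynomial size bound on $m$ in the Set Cover hardness being invoked; once that is secured, the reduction itself is immediate.
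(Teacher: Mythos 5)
Your proposal is correct and matches the paper's own proof essentially step for step: it uses the same reduction (one point per set, one color per universe element, unit covering requirements, radius $0$ so feasibility equals set cover), the same observation that $\log|X| = \Theta(\log n)$ because the hard Set Cover instances have polynomially many sets, and the same contradiction with the $(1-\varepsilon)\ln n$ hardness of Set Cover due to Dinur--Steurer. The only detail left implicit in your write-up, which the paper spells out, is that one must run the hypothetical bi-criteria algorithm over all candidate budgets $k$ and take the smallest $k$ for which it returns a radius-$0$ solution, since the optimum $k^*$ is not known in advance.
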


\subsection{Outline of main technical contributions and paper organization}

We introduce two main technical ingredients. The first is a method to deal with additional covering constraints in $k$-Center problems. We showcase this method in the context of \gckc, which leads to Theorem~\ref{thm:mainGckc}. For this, we combine polyhedral sparsity-based arguments as used by Bandyapadhyay et al.~\cite{DBLP:conf/esa/Bandyapadhyay0P19}, which by themselves only lead to pseudo-approximations, with dynamic programming to design a round-or-cut approach. Round-or-cut approaches, first used by Carr et al.~\cite{carr_2000_strengthening}, leverage the ellipsoid method in a clever way. In each ellipsoid iteration they either separate the current point from a well-defined polyhedron $P$, or round the current point to a good solution. The rounding step may happen even if the current point is not in $P$. Round-or-cut methods have found applications in numerous problem settings (see, e.g.,~\cite{levi_2008_approximation,li_2017_uniform,li_2016_approximating,an_2017_lp-based,nutov_2017_tree,grandoni_2018_improved,DBLP:journals/talg/ChakrabartyN19}).
The way we employ round-or-cut is inspired by a powerful round-or-cut approach of Chakrabarty and Negahbani~\cite{DBLP:journals/talg/ChakrabartyN19} also developed in the context of $k$-Center. However, their approach is not applicable to $k$-Center problems as soon as multiple covering constraints exist, like in \gckc; see Appendix~\ref{appendix:bad-example-deeparnab} for more details.

Our second technical contribution first employs LP duality to transform lottery-type models, like Fair~\gckc, into an auxiliary problem that corresponds to a weighted version of $k$-Center with covering constraints. We then show how a certain type of approximate separation over the dual is possible, by leveraging the techniques we introduced in the context of $\gckc$, leading to a $4$-approximation.

Even though Theorem~\ref{thm:mainFairGckc} is a strictly stronger statement than Theorem~\ref{thm:mainGckc}, we first prove Theorem~\ref{thm:mainGckc} in Section~\ref{sec:main_algorithm}, because it allows us to give a significantly cleaner presentation of some of our main technical contributions. In Section~\ref{sec:lottery}, we then focus on the additional techniques needed to deal with Fair~\gckc, by reducing it to a problem that can be tackled with the techniques introduced in Section~\ref{sec:main_algorithm}. Finally, in Section~\ref{sec:hardness}, we discuss the hardness results stated in Theorems~\ref{thm:mainHardness} and~\ref{thm:bicriteria-hardness}.

\section{A 4-approximation for \gckcB with running time $\bm{|X|^{O(\gamma)}}$}
\label{sec:main_algorithm}

In this section, we prove Theorem~\ref{thm:mainGckc}, which implies a polynomial-time $4$-approximation algorithm for \gckc with constantly many colors. We assume $\g \geq 2$; notice that $\g =1$ corresponds to Robust $k$-Center, for which an (optimal) polynomial-time $2$-approximation is known~\cite{DBLP:conf/icalp/ChakrabartyGK16,HarrisPST19}. Moreover, we assume that $\gamma < k$, since otherwise, we can simply enumerate over all subsets of $X$ of size $k$, which leads to an exact algorithm with running time $|X|^{O(k)} \leq |X|^{O(\gamma)}$. Thus, from now on, we have that $2 \leq \gamma \leq k - 1$.

We present a procedure that for any $r\in \mathbb{R}_{\geq 0}$ returns a solution of radius $4r$ if a solution of radius $r$ exists, and runs in time $|X|^{O(\gamma)}$. This implies Theorem~\ref{thm:mainGckc} because the optimal radius is a distance between two points. Hence, we can run the procedure for all possible pairwise distances $r$ between points in $X$ (or, alternatively, do binary search on the set of pairwise distances in order to speed up the algorithm) and return the best solution found. Thus, we fix $r\in \mathbb{R}_{\geq 0}$ in what follows.
We denote by $\p$ the following canonical relaxation of \gckc with radius $r$:
\begin{equation}\label{eq:defP}
\p = \left\{ (x,y)\in [0,1]^X \times [0,1]^X \: \middle| \:
\renewcommand\arraystretch{1.2}
\begin{array}{>{\displaystyle}rcl@{\quad}l}
    \sum_{v\in X}y(v)                &\leq &k & \\
    \sum_{v\in B(u,r)} y(v)  &\geq& x(u) & \forall u\in X \\ 
    \sum_{u\in X_{\ell}} x(u)             &\geq &m_{\ell} &\forall \ell \in [\g]  
\end{array}
\right\}\enspace .
\end{equation}

Integral points $(x,y)\in \p$ correspond to solutions of radius $r$, where $x$ and $y$ are characteristic vectors indicating the points that are covered and the centers that are opened, respectively. We denote the integer hull of $\p$ by $\ip\coloneqq \conv\left(\p \cap (\{0,1\}^X \times \{0,1\}^X )\right)$ .

Our algorithm is based on the round-or-cut framework, first used in~\cite{carr_2000_strengthening}. The main building block is a procedure that rounds a point $(x,y)\in \p$ to a radius $4r$ solution under certain conditions. It will turn out that these conditions are always satisfied if $(x,y) \in \ip$. If they are not satisfied, then we can prove that $(x,y) \notin \ip$ and generate in time $|X|^{O(\gamma)}$ a hyperplane separating $(x,y)$ from $\ip$. This separation step now becomes an iteration of the ellipsoid method, employed to find a point in $\ip$, and we continue with a new candidate point $(x,y)$. Schematically, the whole process is described in Figure~\ref{fig:flowchart}.

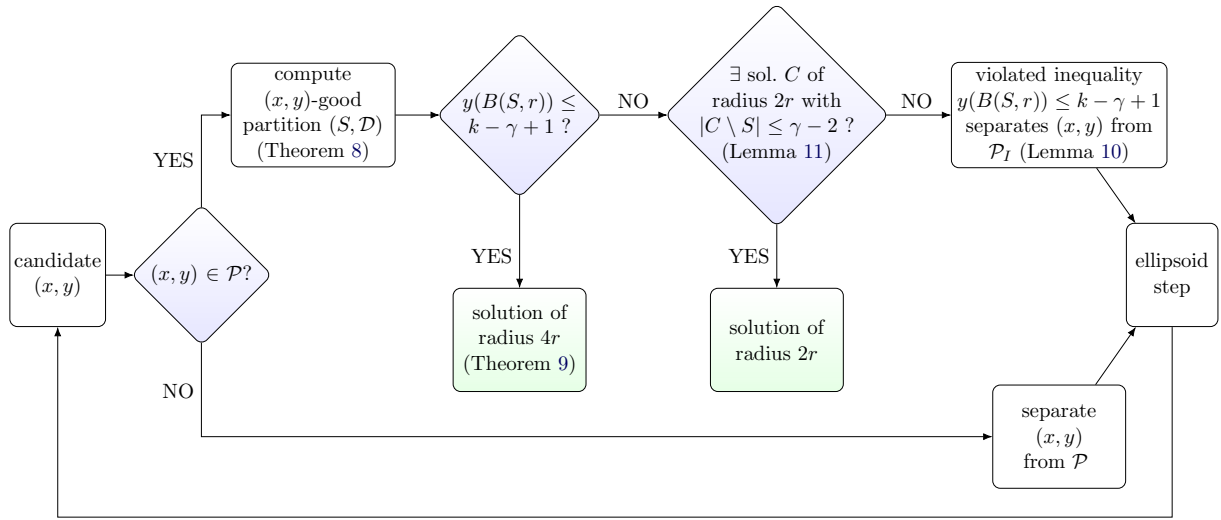
\begin{figure}[ht]
\begin{center}
    \hspace*{0.1em}\resizebox{.99\textwidth}{!}{%
\begin{tikzpicture}
  \matrix (chart)
    [
      matrix of nodes,
      nodes in empty cells,
      column sep      = 3em,
      row sep         = 0.4em,
      minimum height = 5em,
      ampersand replacement=\&
    ]
    {       \&[-1.8em]      \&[-5em] \node [treenode,text width=7.5em] (partition) {compute $(x,y)$-good partition $(S,\mathcal{D})$ (Theorem~\ref{thm:partition-algorithm})};  \&[-1em] \node [decision] (dec1) { $ y(B(S,r))\leq k-\gamma+1 $ ?}; \&  \node [decision,text width=8.0em, inner sep = -5pt] (dec2) {$\exists$ sol.~$C$ of \\ radius $2r$ with\\ $|C\setminus S| \leq \gamma - 2$ ?\\ (Lemma~\ref{lemma:DP})};    \&     \node [treenode,text width=10em] (sep2) {violated inequality $y(B(S,r))\leq k-\gamma+1$ separates $(x,y)$ from $\ip$ (Lemma~\ref{lemma:separate}) };    \&[-5em] \\[-1.5em]
    \node [treenode, text width=4.2em] (candidate) {candidate $(x,y)$};	\& \node [decision] (membership) {$(x,y)\in \mathcal{P}$?} ;  \& 	\&   \&     	\&\&	\node[treenode, text width=4em] (ellipsoid) {ellipsoid\\ step}; \\ [-3.2em]
    \&\&\&  	\node[env] (4app) {solution of radius $4r$ (Theorem~\ref{lemma:small-y-sum})};    \&     \node [env] (2app) {solution of radius $2r$};    \&   \\[-0.75em]
    	 \&\&\&\& \& \node [treenode] (sep1) {separate $(x,y)$ from $\mathcal{P}\ $ }; \& \\
    };

   \tikzstyle{every path}=[draw, -Latex]
   
   \begin{scope}[shorten >= -1.7pt]
    \path (candidate) -- (membership);
    \path (partition) -- (dec1);
   \end{scope}
   
   \begin{scope}[shorten <= -1.7pt]
   \path (membership) |-  node [left, near start] {NO}(sep1);
   \path (membership) |- node [left, near start] {YES} (partition);
   \path[shorten >= -1.7pt] (dec1) -- node [above]{NO} (dec2);
   \path (dec2)--node [above, xshift = -1.7pt] {NO} (sep2);
   \path (dec1)-- node[left, pos=1, yshift = 17pt] {YES}(4app);
   \path (dec2) -- node[left, pos=1, yshift = 17pt]{YES} (2app);
   \end{scope}
   \path (sep1) -- (ellipsoid);
   \path (sep2) -- (ellipsoid);
   \coordinate  (intersection) at (sep1 -| ellipsoid) {} ;
   \coordinate[below = 1.5cm of intersection] (bottomright) {};
   \draw (ellipsoid.south) -- (bottomright) -| (candidate);

\end{tikzpicture} %
}
\end{center}
\caption{An iteration of the ellipsoid method.}
\label{fig:flowchart}
\end{figure}

On a high level, we realize our round-or-cut procedure as follows. First, we check whether $(x,y) \in \p$ and return a violated constraint if this is not the case. If $(x,y)\in \p$, we partition the metric space, based on a natural greedy heuristic introduced by Harris et al.~\cite{HarrisPST19}. This gives a set of centers $S=\{s_1,\ldots , s_q\}\subseteq X$ with corresponding clusters $\mathcal{D}=\{D_1, \ldots, D_q\}\subseteq 2^X$. We now exploit a technique by Bandyapadhyay et al.~\cite{DBLP:conf/esa/Bandyapadhyay0P19}, which implies that if $y(B(S, r)) \leq k - \g + 1$, then one can leverage sparsity arguments in a simplified LP to obtain a radius $4r$ solution that picks centers only within $S$. We then turn to the case where $y(B(S, r)) > k - \g + 1$. 
At this point, we show that one can efficiently check whether there exists a solution of radius $2r$ that opens at most $k - (k - \gamma + 2) = \gamma - 2$ centers outside of $S$. This is achieved by guessing the centers outside of $S$ (of which there are at most $\g-2$ many, as noted) and using dynamic programming to find the remaining centers in $S$.
If no such radius $2r$ solution exists, we argue that any solution of radius $r$ has at most $k-\g+1$ centers in $B(S,r)$, proving that $y(B(S, r)) \leq k - \g + 1$ is an inequality separating $(x,y)$ from $\ip$.

\medskip

We now give a formal treatment of each step of this algorithm, which is schematically described in Figure~\ref{fig:flowchart}. Given a point $(x,y)\in \mathbb{R}^X \times \mathbb{R}^X$, we first check whether $(x,y)\in \p$, and, if not, return a violated constraint of $\p$. Such a constraint separates $(x,y)$ from $\ip$ because $\ip \subseteq \p$. Hence, we may assume
that $(x, y) \in \p$. 

We now use a partitioning technique by Harris et al.~\cite{HarrisPST19} that, given $(x, y) \in \p$, allows for obtaining what we call an $(x,y)$-good partition $(S,\mathcal{D})$, defined as follows.

\begin{definition}[$(x,y)$-good partition]
\label{def:good-partition}
Let $(x,y) \in \p$. A tuple $(S,\mathcal{D})$, where the family $\mathcal{D} = \{D_1, \ldots, D_q\}$ partitions $X$ and $S = \{s_1, \ldots, s_q\}\subseteq X$ with $s_i\in D_i$ for $i\in [q]$, is an $(x,y)$-good partition if:
\begin{enumerate}
    \item $d(s_i, s_j) > 4r$ for all $i,j\in [q], i\neq j$,
    \item $D_i \subseteq B(s_i, 4r)$ for all  $i \in [q]$, and
    \item \label{item:goodClusterColors} 
    $y(B(s_i,r)) \geq x(u)$ for all $i\in [q]$ and for all $u\in D_i$.    
\end{enumerate}
\end{definition}

The partitioning procedure of~\cite{HarrisPST19} was originally introduced for Robust $k$-Center and naturally extends to \gckc (see~\cite{DBLP:conf/esa/Bandyapadhyay0P19}). For completeness, we describe it in Algorithm~\ref{alg:partition}.
Contrary to prior procedures, we compute an $(x,y)$-good partition whose centers have pairwise distances of strictly more than $4r$ (instead of $2r$ as in prior work). This large separation avoids overlap of radius $2r$ balls around centers in $S$, and allows us to use dynamic programming (DP) to build a radius $2r$ solution with centers in $S$ under certain conditions. However, it is also the reason why we get a $4$-approximation if the DP approach cannot be applied.

\medskip
\begin{algorithm2e}[ht]
\SetAlgoLined
 $U \leftarrow X$; $\; i\leftarrow 0$\; 
 $S \leftarrow \emptyset$; $\; \mathcal{D} \leftarrow \emptyset$\;
 \While{$U \neq \emptyset$} {
 $i\leftarrow i+1$; $\;s_i \leftarrow \argmax_{u \in U} \{x(u)\}$\;
 $D_i \leftarrow B(s_i, 4r) \cap U$\;
 $S\leftarrow S \cup \{s_i\}$; $\; \mathcal{D} \leftarrow \mathcal{D} \cup \{D_i\}$\;
 $U\leftarrow U\setminus B(s_i,4r)$; 
 }
 \KwRet{$(S,\mathcal{D})$}
 \caption{Compute $(x,y)$-good partition, given $(x,y) \in \p$}
\label{alg:partition}
\end{algorithm2e}

\begin{lemma}[\cite{HarrisPST19,DBLP:conf/esa/Bandyapadhyay0P19}] \label{thm:partition-algorithm}
For $(x,y) \in \p$, Algorithm~\ref{alg:partition} computes an $(x,y)$-good partition $(S, \mathcal{D})$ in polynomial time.
\end{lemma}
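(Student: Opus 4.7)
The plan is to verify the three defining properties of an $(x,y)$-good partition directly from the construction in Algorithm~\ref{alg:partition}, and to argue termination in polynomial time.

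First, I would handle termination: at each iteration, the chosen center $s_i$ satisfies $s_i \in B(s_i, 4r) \cap U$, so $U$ strictly shrinks by at least one element per iteration, giving $q \leq |X|$ iterations. Each iteration consists of computing an argmax and a ball, both doable in polynomial time. Moreover, the family $\mathcal{D}$ partitions $X$ because in each iteration we assign to $D_i$ exactly the points of the current $U$ lying in $B(s_i, 4r)$ and then remove those points from $U$, so each element of $X$ is placed in exactly one cluster. Property~(ii) of Definition~\ref{def:good-partition} is immediate from the line $D_i \leftarrow B(s_i, 4r) \cap U$.

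Next, I would establish Property~(i). Consider $i < j$. By the time $s_j$ is chosen, the set $U$ has already had $B(s_i, 4r)$ removed in iteration $i$, yet $s_j \in U$. Hence $s_j \notin B(s_i, 4r)$, which by definition of the ball means $d(s_i, s_j) > 4r$, as required.

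The last property, (iii), is where the LP assumption $(x,y) \in \p$ is used. Fix $i \in [q]$ and $u \in D_i$. By the greedy choice $s_i = \argmax_{v \in U} x(v)$ made when $s_i$ was selected, and since $u \in D_i \subseteq U$ (at the moment of the $i$-th iteration), we have $x(s_i) \geq x(u)$. Now the covering constraint of $\p$ applied at $s_i$ gives
\begin{equation*}
y(B(s_i, r)) \;=\; \sum_{v \in B(s_i, r)} y(v) \;\geq\; x(s_i) \;\geq\; x(u),
\end{equation*}
which is exactly Property~(iii). There is no real obstacle here; the only subtlety is to make sure the argmax comparison is applied at the correct moment (before $u$ is removed from $U$), which is guaranteed by $u \in D_i \subseteq B(s_i, 4r) \cap U$. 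Combining the three properties with the polynomial-time bound completes the proof.
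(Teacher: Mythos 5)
Your proof is correct and follows essentially the same route as the paper's proof: the paper dismisses properties (i) and (ii) as immediate from the construction, and argues property (iii) exactly as you do, combining the greedy choice $x(s_i)\geq x(u)$ with the constraint $y(B(s_i,r))\geq x(s_i)$ from $\p$. Your version just spells out the first two properties and the termination argument in more detail, which the paper leaves implicit.
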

For completeness, we present the proof of the above lemma.
\begin{proof}[Proof of Lemma~\ref{thm:partition-algorithm}]
By construction, the first two properties of the definition of an $(x,y)$-good partition are trivially satisfied by the generated partition $(S, \mathcal{D})$. We now turn to the third property. For each point $u \in D_i$, by the greedy criterion we have $x(u) \leq x(s_i)$. Since $(x,y) \in \p$, we also have $x(s_i) \leq y(B(s_i, r))$, implying the statement. 
\end{proof}
The following theorem follows from the results in~\cite{DBLP:conf/esa/Bandyapadhyay0P19}.
\begin{theorem}[\cite{DBLP:conf/esa/Bandyapadhyay0P19}]
\label{lemma:small-y-sum}
Let $(x, y) \in \p$ and $(S,\mathcal{D})$ be an $(x,y)$-good partition. Then, if $y(B(S, r)) \leq k - \gamma + 1$, a solution of radius $4r$ can be found in polynomial time.
\end{theorem}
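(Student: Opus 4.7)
The plan is to reduce the task to rounding a small auxiliary LP defined on the clusters $D_1,\ldots,D_q$ of the $(x,y)$-good partition $(S,\mathcal{D})$. The crucial structural fact is that $D_i \subseteq B(s_i, 4r)$ for every $i$, so opening the representative $s_i$ covers all of $D_i$ at radius $4r$; hence it suffices to select a subset of representatives of size at most $k$ whose clusters jointly cover $m_\ell$ points of color $\ell$ for every $\ell \in [\gamma]$. Concretely, I would work with the LP
\begin{align*}
\min \sum_{i=1}^{q} z_i \quad \text{s.t.} \quad \sum_{i=1}^{q} z_i\, |D_i \cap X_\ell| \geq m_\ell \ \ \forall \ell \in [\gamma], \qquad 0 \leq z_i \leq 1,
\end{align*}
interpreting $z_i$ as the extent to which $s_i$ is opened.

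Feasibility of this LP follows from the $(x,y)$-good partition: setting $z_i^{0} \coloneqq \min\{y(B(s_i, r)), 1\}$, Property~3 of Definition~\ref{def:good-partition} together with $x(u) \leq 1$ yields $x(u) \leq z_i^{0}$ for every $u \in D_i$. Summing over $u \in X_\ell$ and using that $\mathcal{D}$ partitions $X$ transfers the covering inequalities of $\p$ to the new LP. Since the balls $B(s_i, r)$ are pairwise disjoint (as $d(s_i, s_j) > 4r$), we moreover have $\sum_i z_i^{0} \leq y(B(S, r)) \leq k - \gamma + 1$ by hypothesis. I would then compute an optimal vertex $z^*$ of this LP in polynomial time; because only $\gamma$ non-box constraints are present, standard sparsity of basic feasible solutions guarantees that at most $\gamma$ coordinates of $z^*$ lie strictly in $(0,1)$.

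Finally, I would round up by setting $\bar z_i = 1$ whenever $z_i^* > 0$. This only strengthens the covering constraints, so feasibility is preserved. To control the number of opened centers, write $A = \{i : z_i^* = 1\}$, $B = \{i : 0 < z_i^* < 1\}$, and $\sigma = \sum_{i \in B} z_i^*$; by optimality of $z^*$, $|A| + \sigma = \sum_i z_i^* \leq k - \gamma + 1$. Case analysis: if $\sigma = 0$ then $B = \emptyset$ and $\sum_i \bar z_i = |A| \leq k - \gamma + 1 \leq k$; otherwise $|A|$ is an integer with $|A| \leq k - \gamma + 1 - \sigma < k - \gamma + 1$, so $|A| \leq k - \gamma$ and $\sum_i \bar z_i = |A| + |B| \leq k - \gamma + \gamma = k$. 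Outputting $C = \{s_i : \bar z_i = 1\}$, the disjointness of the $D_i$'s together with $D_i \subseteq B(s_i, 4r)$ gives $|B(C, 4r) \cap X_\ell| \geq \sum_{i:\bar z_i = 1} |D_i \cap X_\ell| \geq m_\ell$ for every $\ell$, so $C$ is a feasible \gckc solution of radius $4r$ with $|C| \leq k$. The delicate step is this final accounting: it is precisely the $\gamma - 1$ slack hidden in the hypothesis $y(B(S,r)) \leq k - \gamma + 1$ that absorbs the up-to-$\gamma$ fractional variables being rounded up, and in return we must pay the factor $4r$ coming from the diameter of each cluster.
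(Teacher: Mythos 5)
Your proof is correct and follows essentially the same route as the paper's (Theorem~\ref{thm:general-sparse} in the appendix): build the cluster-level LP on the $D_i$'s, exhibit the feasible point $z^0_i = \min\{1, y(B(s_i,r))\}$ with objective at most $y(B(S,r))$, take an optimal basic solution with at most $\gamma$ fractional coordinates, and round up all positive entries, with the $\gamma-1$ slack in the hypothesis absorbing the rounding. Your case split on $\sigma$ is a slightly cleaner way of writing the paper's counting argument, but it is the same argument; the paper additionally treats the degenerate boundary cases ($q \le k$, $t = k+1$) for a more general parameter $t$, which your version sidesteps since the accounting works uniformly.
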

For completeness, we provide in Appendix~\ref{appendix:missing-proofs} a proof of a slightly stronger version of Theorem~\ref{lemma:small-y-sum}, namely Theorem~\ref{thm:general-sparse}, which we reuse later in a more general context. Theorem~\ref{thm:general-sparse} easily follows by the same sparsity argument used in~\cite{DBLP:conf/esa/Bandyapadhyay0P19}.

We are left with the case $y(B(S, r)) > k - \gamma + 1$. If $(x,y) \in \ip$, then there must exist a solution $C_1\subseteq X$ of radius $r$ with $|C_1\cap B(S, r)| > k - \gamma + 1$. In particular, we must have $|C_1 \setminus B(S,r)| \leq \gamma - 2$. We observe that if such a solution $C_1$ exists, then there must be a solution $C_2$ of radius $2r$ which has at most $\gamma - 2$ centers outside of $S$. This is formalized in the following lemma.

\begin{lemma}\label{lemma:separate}
Let $S\subseteq X$ with $d(s,s')>4r$ for all $s,s'\in S$ with $s \neq s'$, and $\tau \in \{0, \ldots, k-1\}$. 
If there is a radius $r$ solution $C_1$ with $|C_1\cap B(S,r)| > \tau$, then there is a radius $2r$ solution $C_2$ with $|C_2\setminus S| \leq k - \tau - 1$.
\end{lemma}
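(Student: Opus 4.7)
The plan is to construct $C_2$ explicitly from $C_1$ by ``snapping'' the centers of $C_1$ that lie inside $B(S,r)$ onto the corresponding points of $S$, while keeping the rest of $C_1$ unchanged. Concretely, I would define
\[
S' \;=\; \{\,s \in S : C_1 \cap B(s,r) \neq \emptyset\,\},\qquad C_2 \;=\; S' \,\cup\, \bigl(C_1 \setminus B(S,r)\bigr).
\]
The main points to verify are then (a) that the balls $B(s,r)$ for $s \in S$ are pairwise disjoint, (b) that $|C_2| \leq k$, (c) that $|C_2 \setminus S| \leq k - \tau - 1$, and (d) that $C_2$ is a feasible solution of radius $2r$, which amounts to showing $B(C_1, r) \subseteq B(C_2, 2r)$.

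For (a), the hypothesis $d(s,s') > 4r$ for distinct $s,s' \in S$ together with the triangle inequality immediately rules out any common point in $B(s,r) \cap B(s',r)$, so each center of $C_1 \cap B(S,r)$ lies in exactly one ball $B(s,r)$. This implies $|S'| \leq |C_1 \cap B(S,r)|$, from which
\[
|C_2| \;\leq\; |S'| + |C_1 \setminus B(S,r)| \;\leq\; |C_1 \cap B(S,r)| + |C_1 \setminus B(S,r)| \;=\; |C_1| \;\leq\; k,
\]
giving (b). For (c), since $S' \subseteq S$, we have $C_2 \setminus S = C_1 \setminus B(S,r)$, and by the assumption $|C_1 \cap B(S,r)| > \tau$ (so $|C_1 \cap B(S,r)| \geq \tau+1$),
\[
|C_2 \setminus S| \;=\; |C_1| - |C_1 \cap B(S,r)| \;\leq\; k - \tau - 1.
\]

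The feasibility statement (d) is the real content. I would show $B(C_1, r) \subseteq B(C_2, 2r)$ pointwise: given $u \in B(C_1,r)$, pick $c \in C_1$ with $d(u,c) \leq r$. If $c \notin B(S,r)$, then $c \in C_2$ directly. Otherwise $c \in B(s,r)$ for some (unique) $s \in S$, whence $s \in S' \subseteq C_2$ and by the triangle inequality $d(u,s) \leq d(u,c) + d(c,s) \leq 2r$. In either case $u \in B(C_2, 2r)$. Since $C_1$ is a radius-$r$ solution, $|B(C_1,r) \cap X_\ell| \geq m_\ell$ for every color $\ell$, and the inclusion above then yields $|B(C_2, 2r) \cap X_\ell| \geq m_\ell$, making $C_2$ a valid radius-$2r$ solution.

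The argument is almost entirely bookkeeping; the only subtlety is that the $4r$-separation of $S$ (rather than, say, $2r$) is exactly what forces the balls $B(s,r)$ to be disjoint, which is what makes the count $|S'| \leq |C_1 \cap B(S,r)|$ valid. This is the step where the stronger separation built into Definition~\ref{def:good-partition} pays off, and it is the conceptual crux to highlight.
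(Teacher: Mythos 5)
Your proposal is correct and is essentially the paper's own proof: the paper maps each $p\in A=C_1\cap B(S,r)$ to the unique $\phi(p)\in S$ with $p\in B(\phi(p),r)$ and sets $C_2=\phi(A)\cup(C_1\setminus A)$, which coincides with your $S'\cup\bigl(C_1\setminus B(S,r)\bigr)$, with the same counting and triangle-inequality steps. One small aside: separation strictly greater than $2r$ would already make the radius-$r$ balls around $S$ disjoint; the full $>4r$ separation is really needed elsewhere in the algorithm (e.g., so that the radius-$2r$ balls around $S$ are disjoint in the dynamic program of Lemma~\ref{lemma:DP}).
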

\begin{proof}
Assume there is a solution $C_1$ of radius $r$ with $|C_1\cap B(S,r)| > \tau$. Let $A = C_1 \cap B(S,r)$. For each $p \in A$, let $\phi(p) \in S$ be the unique point in $S$ such that $p \in B(\phi(p), r)$; $\phi(p)$ is well defined because $d(s, s') > 4r$ for every $s \neq s' \in S$. Thus, $|\phi(A)| \leq |A|$, where $\phi(A) \coloneqq \{\phi(p):\,p \in A\}$.

Let $C_2 = \phi(A) \cup (C_1 \setminus A)$. We have $|C_2| = |\phi(A)| + |C_1 \setminus A| \leq |A| + |C_1 \setminus A| \leq k$. Moreover, as $d(p, \phi(p)) \leq r$ for every $p \in A$, we have that $B(C_1, r) \subseteq B(C_2, 2r)$. Thus, $C_2$ is a feasible solution of radius $2r$. Finally, by construction, $|C_2 \setminus S| = |C_1 \setminus B(S,r)| \leq k - \tau - 1$. 
\end{proof}

So, we have now proved that if $y(B(S,r)) > k-\gamma +1$ and $(x,y)\in \ip$, then there is a solution $C_2$ of radius $2r$ with $|C_2\setminus S|\leq \gamma - 2$.
The motivation for considering solutions of radius $2r$ with all centers in $S$ except for constantly many (if $\gamma=O(1)$) is that such solutions can be found efficiently via dynamic programming. This is possible because the centers in $S$ are separated by distances strictly larger than $4r$, which implies that radius $2r$ balls centered at points in $S$ do not overlap. Hence, there are no interactions between such balls. This is formalized below.
\begin{lemma}\label{lemma:DP}
Let $S\subseteq X$ with $d(s,s')>4r$ for all $s,s'\in S$ with $s \neq s'$, and $\beta \in \mathbb{Z}_{\geq 0}$.
If a radius $2r$ solution $C\subseteq X$ with $|C\setminus S|\leq \beta $ exists, then we can find such a solution in time $|X|^{O(\beta + \g)}$.
\end{lemma}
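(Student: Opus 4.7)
The plan is a guess-and-DP approach built around the key geometric consequence of the separation hypothesis: for distinct $s, s' \in S$, the balls $B(s, 2r)$ and $B(s', 2r)$ are disjoint, since any common point would give $d(s,s') \leq 4r$ by the triangle inequality. Hence the radius-$2r$ coverage contributions of different centers in $S$ never overlap and can be added componentwise per color.

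First I would enumerate all candidate ``outside'' center sets $T \subseteq X \setminus S$ with $|T| \leq \beta$; there are at most $|X|^{O(\beta)}$ such sets. For any radius-$2r$ solution $C$ with $|C \setminus S| \leq \beta$, the set $T^\star \coloneqq C \setminus S$ appears among these guesses, so it suffices to design a subroutine that, given $T$, decides whether $T$ can be extended by some $S' \subseteq S$ with $|S'| \leq k - |T|$ to a feasible radius-$2r$ solution (and produces such $S'$ when one exists). Given $T$, let $U \coloneqq X \setminus B(T, 2r)$; for each $s \in S$ define $v_s \in \mathbb{Z}_{\geq 0}^\gamma$ by $v_s[\ell] \coloneqq |B(s, 2r) \cap U \cap X_\ell|$ and the residual demand $b_\ell \coloneqq \max\{0,\, m_\ell - |B(T,2r) \cap X_\ell|\}$. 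By the disjointness observation (and by restricting the $v_s$ to $U$ to prevent double-counting against $T$'s coverage), adding $s$ to $S'$ increases the total covered count of color $\ell$ by exactly $v_s[\ell]$. The task therefore reduces to finding $S' \subseteq S$ with $|S'| \leq k-|T|$ and $\sum_{s \in S'} v_s \geq b$ componentwise.

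I would solve this feasibility problem by a dynamic program with state $(i, j, \vec c) \in \{0,\dots,|S|\} \times \{0,\dots,k\} \times \{0,\dots,|X|\}^\gamma$, where $f(i,j,\vec c)$ is true iff some $S' \subseteq \{s_1,\dots,s_i\}$ with $|S'|=j$ achieves a coverage vector equal to $\vec c$ after capping each coordinate at $|X|$ (which is sound because $m_\ell \leq |X|$). The transition considers whether to include $s_i$ or not, and the answer is ``yes'' iff some $f(|S|, j, \vec c)$ with $j \leq k-|T|$ and $\vec c \geq b$ is true; the witness $S'$ is recovered by back-pointers. The DP has $|X|^{O(\gamma)}$ states and $O(1)$ transitions, so the subroutine runs in time $|X|^{O(\gamma)}$, and the overall running time is $|X|^{O(\beta)} \cdot |X|^{O(\gamma)} = |X|^{O(\beta+\gamma)}$.

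Correctness is immediate: if a radius-$2r$ solution $C$ with $|C\setminus S|\leq \beta$ exists, then the iteration with $T = C\setminus S$ admits $S' \coloneqq C \cap S$ as a valid DP witness, so the algorithm will certify feasibility and return some such solution. The main point to be careful about is the possibility of double-counting points lying in both $B(T,2r)$ and some $B(s,2r)$ for $s\in S$; intersecting with $U$ in the definition of $v_s$ handles this uniformly. The only reason this DP has a polynomial state space at all is the capping of the coverage vector at $|X|$, together with the disjointness of the $B(s,2r)$, which allows ``independent'' aggregation over the centers of $S$.
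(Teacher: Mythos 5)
Your proposal is correct and takes essentially the same approach as the paper: guess the at most $\beta$ centers outside $S$, use the fact that the $2r$-balls around centers in $S$ are pairwise disjoint (so color coverage is additive over $S$), and solve the resulting multi-dimensional covering/selection problem over $S$ by a dynamic program indexed by a $\gamma$-dimensional coverage vector bounded by $|X|$ per coordinate. The only cosmetic difference is that the paper formulates the subproblem as a binary program and invokes a generic DP theorem (tracking residual demand rather than accumulated, capped coverage), whereas you build the DP directly; both give $|X|^{O(\gamma)}$ per guess and $|X|^{O(\beta+\gamma)}$ overall.
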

\begin{proof}
Suppose there is a solution $C \subseteq X$ of radius $2r$ with $|C \setminus S| \leq \beta $. The algorithm has two components. We first guess the set $Q\coloneqq C\setminus S$. Because $|Q| \leq \beta $, there are $|X|^{O(\beta )}$ choices. Given $Q$, it remains to select at most $k-|Q|$ centers $W\subseteq S$ to fulfill the color requirements. Note that for any $W\subseteq S$, the number of points of color $\ell\in [\gamma]$ that $B(W,2r)$ covers on top of those already covered by $B(Q,2r)$ is
$\left\vert(B(W,2r)\setminus B(Q,2r))\cap X_\ell\right\vert =
\sum_{w\in W} \left\vert \left(B(w,2r) \setminus B(Q,2r) \right) \cap X_\ell\right\vert,
$
where equality holds because centers in $W$ are separated by distances strictly larger than $4r$, and thus $B(W,2r)$ is the disjoint union of the sets $B(w,2r)$ for $w\in W$. Hence, the task of finding a set $W\subseteq S$ with $|W|\leq k-|Q|$ such that $Q\cup W$ is a solution of radius $2r$ can be phrased as finding a feasible solution to the following binary program:
\begin{equation}
\label{binary-program}
\begingroup
\renewcommand*{\arraystretch}{1.0}
\begin{array}{>{\displaystyle}rcl@{\quad}l}
\sum_{s\in S} z(s) \cdot \left\vert (B(s,2r) \setminus B(Q,2r)) \cap X_\ell \right\vert &\geq & m_\ell - |B(Q,2r)\cap X_\ell|  &\forall \ell\in [\gamma]\\
\sum_{s\in S} z(s)  &\leq  &k-|Q| &\\
z &\in &\{0,1\}^S \enspace . &
\end{array}
\endgroup
\end{equation}
The above binary program can be easily solved through standard dynamic programming techniques in $|X|^{O(\gamma)}$ time, because the coefficients are small. For completeness, we show in Appendix~\ref{appendix:missing-proofs} how this can be done for a slightly more general problem (see Theorem~\ref{thm:dp}), which we will reuse later on.\footnote{Program~(\ref{binary-program}) reduces to the one of Theorem~\ref{thm:dp}  by removing any redundant constraint of the first type that has negative right-hand side.} As the dynamic program is run for $|X|^{O(\beta)}$ many guesses of $Q$, we obtain an overall running time of $|X|^{O(\beta+\gamma)}$, as claimed. 
\end{proof}

This completes the last ingredient for an iteration of our round-or-cut approach as shown in Figure~\ref{fig:flowchart}. In summary, assuming $y(B(S,r)) > k-\gamma +1$ (for otherwise Theorem~\ref{lemma:small-y-sum} leads to a solution of radius $4r$) we use Lemma~\ref{lemma:DP} to check whether there is a radius $2r$ solution $C_2$ with $|C_2\setminus S|\leq \gamma-2$. This requires $|X|^{O(\gamma)}$ time. If this is the case, we are done. If not, the contrapositive of Lemma~\ref{lemma:separate} (with $\tau =k - \g + 1$) implies that every radius $r$ solution $C_1$ fulfills $|C_1\cap B(S,r)| \leq k - \gamma + 1$. Hence, every point $(\overline{x},\overline{y})\in \ip$ satisfies $\overline{y}(B(S,r)) \leq k - \gamma +1$. However, this constraint is violated by $(x,y)$, and so it separates $(x,y)$ from $\ip$. Thus, we proved that the process described in Figure~\ref{fig:flowchart} is a valid round-or-cut procedure that runs in time $|X|^{O(\gamma)}$.
\begin{corollary}
There is an algorithm that, given a point $(x,y)\in \mathbb{R}^X \times \mathbb{R}^X$, either returns a \gckc solution of radius $4r$ or an inequality separating $(x,y)$ from $\ip$. The running time of the algorithm is $|X|^{O(\gamma)}$.
\end{corollary}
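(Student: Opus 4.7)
The plan is to assemble the round-or-cut iteration by stringing together, in the order suggested by Figure~\ref{fig:flowchart}, the tools developed in Section~\ref{sec:main_algorithm}. Essentially all the technical work has already been done by Theorem~\ref{thm:partition-algorithm}, Theorem~\ref{lemma:small-y-sum}, Lemma~\ref{lemma:separate}, and Lemma~\ref{lemma:DP}; what remains is a case analysis that, in every branch, either produces a radius-$4r$ solution or a hyperplane separating $(x,y)$ from $\ip$, together with a running-time accounting.

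First, given the input $(x,y)\in \mathbb{R}^X\times\mathbb{R}^X$, I would check membership in $\p$ by inspecting each of the polynomially many defining inequalities in~\eqref{eq:defP}. If some inequality is violated, it separates $(x,y)$ from $\p\supseteq \ip$, and I am done. Otherwise, I invoke Algorithm~\ref{alg:partition} to compute an $(x,y)$-good partition $(S,\mathcal{D})$ in polynomial time, as guaranteed by Theorem~\ref{thm:partition-algorithm}. I then branch on the quantity $y(B(S,r))$.

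In the first branch, if $y(B(S,r))\leq k-\gamma+1$, then Theorem~\ref{lemma:small-y-sum} directly produces a feasible $\gckc$ solution of radius $4r$ in polynomial time, and I return it. In the second branch, where $y(B(S,r)) > k-\gamma+1$, I apply Lemma~\ref{lemma:DP} with parameter $\beta = \gamma-2$ to decide, in time $|X|^{O(\gamma)}$, whether there is a radius-$2r$ solution $C$ with $|C\setminus S|\leq \gamma-2$. If such a $C$ exists, the lemma returns it, and I output it as a (stronger than required) radius-$2r$ solution. Otherwise, I use the contrapositive of Lemma~\ref{lemma:separate} with $\tau = k-\gamma+1$ (noting that the condition $d(s,s')>4r$ for distinct $s,s'\in S$ is part of the $(x,y)$-good partition definition): the nonexistence of such a $C$ of radius $2r$ with at most $\gamma-2$ centers outside $S$ implies that \emph{every} radius-$r$ solution $C_1$ satisfies $|C_1\cap B(S,r)|\leq k-\gamma+1$. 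Consequently, every $(\bar x,\bar y)\in \ip$ satisfies $\bar y(B(S,r))\leq k-\gamma+1$; since $(x,y)$ violates this inequality, it is a valid separating hyperplane, which I return.

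Finally, I would tally the running times: the membership check in $\p$ and the computation of the partition are polynomial, Theorem~\ref{lemma:small-y-sum} is polynomial, and the DP of Lemma~\ref{lemma:DP} with $\beta=\gamma-2$ runs in $|X|^{O(\gamma)}$ time, which dominates the total running time and matches the claimed bound. The only subtle point—hardly an obstacle, but the one place where care is needed—is to verify that the hypotheses of Lemma~\ref{lemma:separate} and Lemma~\ref{lemma:DP} (in particular $d(s,s')>4r$) are inherited from the $(x,y)$-good partition, so that the separating inequality produced in the last branch is indeed valid for all of $\ip$, not merely for the subset of integral points compatible with some fixed guess.
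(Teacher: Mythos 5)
Your proposal is correct and follows essentially the same route as the paper: check membership in $\p$, compute the $(x,y)$-good partition, split on whether $y(B(S,r)) \leq k-\gamma+1$, and in the remaining case run the DP of Lemma~\ref{lemma:DP} with $\beta=\gamma-2$, falling back on the contrapositive of Lemma~\ref{lemma:separate} with $\tau=k-\gamma+1$ to produce the separating inequality $\overline{y}(B(S,r))\leq k-\gamma+1$. Your closing observation that the $d(s,s')>4r$ hypothesis of Lemmas~\ref{lemma:separate} and~\ref{lemma:DP} is inherited from property~(i) of Definition~\ref{def:good-partition} is exactly the small verification the paper performs implicitly.
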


We can now prove the main theorem.
\begin{proof}[Proof of Theorem~\ref{thm:mainGckc}]
We run the ellipsoid method on $\ip$ for each of the $O(|X|^2)$ candidate radii $r$.
For each $r$, the number of ellipsoid iterations is polynomially bounded as the separating hyperplanes that are produced by the algorithm have encoding length at most $O(|X|)$ (see Theorem 6.4.9 of~\cite{schrijver2012geometric}). To see this, note that all generated hyperplanes are either inequalities defining $\p$ or inequalities of the form $y(B(S,r))\leq k-\g+1$. For the correct guess of $r$, $\ip$ is non-empty and the algorithm terminates by returning a radius $4r$ solution. Hence, if we return the best solution among those computed for all guesses of $r$, we have a $4$-approximation, and the total running time is $\poly(|X|) \cdot |X|^{O(\gamma)} = |X|^{O(\gamma)}$.  
\end{proof}

\section{The lottery model of Harris et al.~\cite{HarrisPST19}}\label{sec:lottery}

Our main tool to solve the lottery model of Harris et al.~\cite{HarrisPST19} is a reduction to a certain type of weighted $k$-center problem. A key step of this reduction is to transform the problem through the use of linear duality. In Subsection~\ref{subsec:lottRedWeighted}, we first present this reduction before proving in Subsection~\ref{app:lottery} our algorithmic result for the above-referred version of a weighted $k$-center problem.

\subsection{Reduction to weighted version of $
\gckcB$}\label{subsec:lottRedWeighted}

Let $(X,d)$ be a Fair \gckc instance, and let $\mathcal{F}(r)$ be the family of sets of centers satisfying the covering requirements with radius $r$, i.e.,
\begin{equation*}
\mathcal{F}(r)\coloneqq \big\{C \subseteq X \,\big\vert\, |C| \leq k \textrm{ and } |B(C,r) \cap X_\ell| \geq m_\ell \;\;\forall \ell\in [\gamma]\big\} \enspace.
\end{equation*}
Note that a radius $r$ solution for Fair~$\gckc$ defines a distribution over the sets in $\mathcal{F}(r)$. Given $r$, such a distribution exists if and only if the following (exponential-size) linear program $\PLP(r)$ is feasible (with $\DLP(r)$ being its dual):

\renewcommand\arraystretch{1.3}
{ \everymath={\displaystyle}
\begin{equation*}
\begin{array}{lrcl>{\quad}ll}

       \PLP(r):&\min &  0 \hspace*{40pt}& & \\
    
  &\mathrm{s.t.}&\sum_{\substack{C \in \mathcal{F}(r):\\ u \in B(C,r)}}\lambda(C) &\geq &p(u)   &\forall u \in X\\
 	 &&\sum_{C\in \mathcal{F}(r)} \lambda(C)  &=&1&\\
 	 &&\lambda &\in &\mathbb{R}_{\geq 0}^{\mathcal{F}(r)}&\\
 	 &&&&&\\
 	    \DLP(r):&\max & \sum_{u \in X} p(u)\alpha(u) - \mu &&\\
    &\mathrm{s.t.}& \sum_{u \in B(C,r)} \alpha(u) &\leq &\mu &\forall C \in \mathcal{F}(r)\\
    && \alpha &\in &\mathbb{R}_{\geq 0}^{X}\\
    && \mu &\in &\mathbb{R}\enspace.
    
\end{array}
\end{equation*}

}
\renewcommand\arraystretch{1.0}

Clearly, if $\PLP(r)$ is feasible, then its optimal value is $0$. As mentioned in the introduction, it is also easy to see that if  $\PLP(r)$ is feasible, then it has a feasible solution with polynomial support (since the number of non-trivial constraints is $|X| + 1$).

We will again assume that $\gamma < k$. If $\gamma \geq k$, then for each fixed radius $r$, we solve $\PLP(r)$ in time $\poly(L)\cdot |X|^{O(k)} \leq \poly(L)\cdot |X|^{O(\gamma)}$, where $L$ is the encoding length of the input. If $\PLP(r)$ is infeasible, then the radius $r$ is too small. Otherwise, we compute a feasible extreme point solution to $\PLP(r)$ which corresponds to a distribution with support size $\poly(|X|)$. Hence, by applying binary search over all candidate radii, which are the $O(|X|^2)$ pairwise distances between points in $X$, we can compute an optimal distribution for the smallest possible radius in $\poly(L) \cdot |X|^{O(\gamma)}$ time. Thus, from now on, we assume that $1 \leq \gamma < k$.

Note that, for any $r \geq 0$, $\DLP(r)$ always has a feasible solution (the zero vector) of value $0$. Thus, by strong duality, $\PLP(r)$ is feasible if and only if the optimal value of $\DLP(r)$ is $0$. We note now that $\DLP(r)$ is scale-invariant, meaning that if $(\alpha, \mu)$ is feasible for $\DLP(r)$ then so is $(t\alpha, t\mu)$ for $t\in \mathbb{R}_{\geq 0}$. Hence, $\DLP(r)$ has a solution of strictly positive objective value if and only if $\DLP(r)$ is unbounded. We thus define the following polyhedron $\q(r)$, which contains all solutions of $\DLP(r)$ of value at least $1$:
\begingroup
\everymath={\displaystyle}
\begin{equation*}\label{eq:Qr}
   \q(r) \coloneqq \left\{
   (\alpha, \mu) \in \mathbb{R}_{\geq 0}^{X} \times \mathbb{R} \, \middle \vert \, 
   \begin{array}{rcl@{\quad}l}
   \sum_{u \in X} p(u)\alpha(u) &\geq& \mu + 1\\
    \sum_{u \in B(C,r)}^{} \alpha(u) &\leq& \mu &\forall C \in \mathcal{F}(r)
    \end{array}
    \right\}\enspace.
\end{equation*}
\endgroup
As discussed, the following statement is a direct consequence of strong duality of linear programming.
\begin{lemma}\label{lemma:empty-or-zero}
$\q(r)$ is empty if and only if $PLP(r)$ is feasible.%
\end{lemma}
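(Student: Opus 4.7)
The plan is to read off the statement as a direct consequence of strong LP duality together with the scale-invariance of $\DLP(r)$. Both LPs are finite (the primal has one variable per set in the finite family $\mathcal{F}(r)$ and the dual has $|X|+1$ variables), so there are no subtleties about invoking duality.

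First I would handle the easy direction. Suppose $\PLP(r)$ is feasible. Since its objective is identically $0$, its optimal value equals $0$. The dual $\DLP(r)$ is always feasible via $(\alpha,\mu)=(0,0)$, which attains objective value $0$, so by weak duality the optimum of $\DLP(r)$ is at most $0$, and hence exactly $0$. Consequently, no feasible point of $\DLP(r)$ can achieve objective value at least $1$; equivalently, no $(\alpha,\mu)\in \mathbb{R}_{\geq 0}^X\times\mathbb{R}$ simultaneously satisfies $\sum_u p(u)\alpha(u)\geq \mu+1$ and $\sum_{u\in B(C,r)}\alpha(u)\leq \mu$ for all $C\in \mathcal{F}(r)$, i.e.\ $\q(r)=\emptyset$.

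For the converse, suppose $\PLP(r)$ is infeasible. By strong LP duality, either $\DLP(r)$ is infeasible or $\DLP(r)$ is unbounded. Since $(0,0)$ is always feasible for $\DLP(r)$, the second alternative must hold, so there exists some feasible $(\alpha,\mu)$ with $\sum_u p(u)\alpha(u)-\mu>0$. The constraints of $\DLP(r)$ are homogeneous in $(\alpha,\mu)$, so scaling by any $t\in \mathbb{R}_{\geq 0}$ preserves feasibility while multiplying the objective by $t$. Choosing $t$ large enough yields a feasible point with $\sum_u p(u)\alpha(u)-\mu\geq 1$, which by definition lies in $\q(r)$, so $\q(r)\neq \emptyset$.

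The only place that requires mild care is justifying strong duality when $\PLP(r)$ is infeasible—one has to rule out simultaneous primal and dual infeasibility. This is immediate here because the dual is manifestly feasible, so the usual primal/dual dichotomy forces unboundedness, which is exactly what the scale-invariance argument needs. I do not expect any other step to present difficulty.
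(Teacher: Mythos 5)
Your proof is correct and follows essentially the same route as the paper: note that $\DLP(r)$ is always feasible at $(0,0)$, invoke LP duality to conclude that $\PLP(r)$ is feasible iff $\DLP(r)$ has optimum $0$, and use scale-invariance of $\DLP(r)$ to convert a strictly positive dual objective value into one that is at least $1$, which is exactly membership in $\q(r)$.
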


The main lemma that allows us to obtain our result is the following. It guarantees the existence of an algorithm approximately solving a certain weighted $k$-center problem, where clients are weighted by $\alpha\in \mathbb{Q}^X_{\geq 0}$. Before proving the lemma in Subsection~\ref{app:lottery}, we show that it implies Theorem~\ref{thm:mainFairGckc}.

\begin{lemma}
\label{lemma:separate-or-nonempty_lottery}
    There is an algorithm that, given a point $(\alpha,\mu) \in \mathbb{Q}_{\geq 0}^{X} \times \mathbb{Q}$ satisfying $\sum_{u\in X}p(u)\alpha(u) \geq \mu + 1$ and a radius $r \geq 0$, either certifies that $(\alpha, \mu) \in \q(r)$, or outputs a set $C\in \mathcal{F}(4r)$ with $\sum_{u\in B(C,4r)} \alpha(u) > \mu$. The running time of the algorithm is $\poly(L) \cdot |X|^{O(\gamma)}$, where $L$ is the encoding length of the input.
\end{lemma}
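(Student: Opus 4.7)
The plan is to cast the approximate separation problem for $\q(r)$ as a weighted variant of $\gckc$ and handle it with the round-or-cut framework of Section~\ref{sec:main_algorithm}. Since the hypothesis $\sum_u p(u)\alpha(u) \geq \mu + 1$ already enforces the first defining inequality of $\q(r)$, certifying $(\alpha,\mu) \in \q(r)$ is equivalent to showing that $\sum_{u \in B(C,r)} \alpha(u) \leq \mu$ for every $C \in \mathcal{F}(r)$. Thus, I must decide whether some $C \in \mathcal{F}(r)$ achieves weighted coverage exceeding $\mu$ and, if so, output any $C' \in \mathcal{F}(4r)$ satisfying the inequality at radius $4r$. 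I would work with the polytope
\[
\fp \;\coloneqq\; \p \cap \Big\{(x,y) \;:\; \textstyle\sum_{u \in X}\alpha(u)\,x(u) \geq \mu + \delta\Big\},
\]
where $\delta>0$ is a small rational (depending on the common denominator of $\alpha$ and $\mu$) chosen so that the integral points of $\fp$ correspond exactly to the desired solutions in $\mathcal{F}(r)$, and its integer hull $\fip$.

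I would run the ellipsoid method on $\fip$, mirroring Figure~\ref{fig:flowchart}: given a candidate $(x,y)$, first check membership in $\fp$ (returning a violated inequality otherwise); otherwise, compute an $(x,y)$-good partition $(S,\mathcal{D})$ via Algorithm~\ref{alg:partition}. Because $\fp$ carries one extra covering-type inequality beyond $\p$, the sparsity-based rounding provided by Theorem~\ref{thm:general-sparse} produces a feasible $C \in \mathcal{F}(4r)$ with $\sum_{u \in B(C,4r)} \alpha(u) > \mu$ whenever $y(B(S,r)) \leq k - \gamma$, i.e., one unit tighter than the threshold in Theorem~\ref{lemma:small-y-sum}. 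In the remaining case $y(B(S,r)) > k - \gamma$, the contrapositive of Lemma~\ref{lemma:separate} applied with $\tau = k - \gamma$ shows that if $(x,y) \in \fip$ then there must exist a radius $2r$ solution $C$ with $|C \setminus S| \leq \gamma - 1$ and $\sum_{u \in B(C,2r)} \alpha(u) > \mu$. I would check for such $C$ by enumerating the at most $\gamma - 1$ centers outside $S$ (giving $|X|^{O(\gamma)}$ guesses) and, for each guess, running the dynamic program of Theorem~\ref{thm:dp} (suitably extended to handle the $\alpha$-weighted inequality) to pick the remaining centers in $S$. Success yields the desired $C$; failure means $y(B(S,r)) \leq k - \gamma$ is a valid inequality for $\fip$ violated by $(x,y)$, which is the required separating hyperplane. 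If the ellipsoid eventually declares $\fip$ empty (after polynomially many iterations, since all generated cuts have polynomial encoding length as in the proof of Theorem~\ref{thm:mainGckc}), we certify $(\alpha,\mu) \in \q(r)$.

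The main technical obstacle is accommodating the $\alpha$-weighted constraint inside the DP: whereas the color thresholds $m_\ell$ are unary-bounded and their coefficients are $0/1$, the entries of $\alpha$ and $\mu$ are arbitrary rationals, so naively introducing a knapsack-style constraint with coefficients $\alpha(s)$ would blow up the state space. I would sidestep this by promoting the $\alpha$-weighted quantity to a maximization objective: the DP iterates over $s \in S$ and records, for each state consisting of the $\gamma$-dimensional vector of already-covered color counts and the running number of selected centers, the maximum achievable $\alpha$-weighted gain. After the table is filled, for each guess $Q$ of outside-$S$ centers I check whether this maximum exceeds $\mu - \sum_{u \in B(Q,2r)} \alpha(u)$. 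Since each $m_\ell \leq |X|$, the color state space has size $|X|^{O(\gamma)}$, and the whole procedure runs in $\poly(L) \cdot |X|^{O(\gamma)}$ time, matching the claimed bound.
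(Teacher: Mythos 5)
Your proposal is correct and takes essentially the same route as the paper's proof: you define the same modified relaxation $\fp$ with an extra $\alpha$-weighted covering inequality (the paper's Lemma~\ref{lemma:eps-gap} supplies your $\delta$), run the ellipsoid method on its integer hull with the same round-or-cut dichotomy (sparsity rounding via Theorem~\ref{thm:general-sparse} with the threshold $k-\gamma$, DP-based rounding at radius $2r$ with guesses of at most $\gamma-1$ outside centers, and the separating inequality $y(B(S,r))\le k-\gamma$ otherwise), and you correctly identify the key technical point that the rational $\alpha$-weights must be moved into the DP objective rather than the state space, exactly as in the paper's Lemma~\ref{lemma:DP_lottery} and Theorem~\ref{thm:dp}. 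The only cosmetic slip is that you invoke the "contrapositive" of Lemma~\ref{lemma:separate} to argue that a radius-$2r$ solution with few centers outside $S$ exists when $(x,y)\in\fip$; that existence claim is the direct implication of the (weighted analogue of the) lemma, while the contrapositive is what you correctly use to justify the separating hyperplane when the DP fails.
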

In words, Lemma~\ref{lemma:separate-or-nonempty_lottery} either certifies $(\alpha,\mu)\in \q(r)$ or returns a hyperplane separating $(\alpha,\mu)$ from $\q(4r)$. Its proof leverages techniques introduced in Section~\ref{sec:main_algorithm}, and we present it in Subsection~\ref{app:lottery}.
Using Lemma~\ref{lemma:separate-or-nonempty_lottery}, we can now prove Theorem~\ref{thm:mainFairGckc}.
\begin{proof}[Proof of Theorem~\ref{thm:mainFairGckc}]
As noted, there are polynomially many choices for the radius $r$, for each of which we run the ellipsoid method to check emptiness of $\q(4r)$ as follows. Whenever there is a call to the separation oracle for a point $(\alpha,\mu)\in \mathbb{Q}^X \times \mathbb{Q}$, we first check whether $\alpha \geq 0$ and $\sum_{u\in X} p(u)\alpha(u) \geq \mu+1$. If one of these constraints is violated, we return it as separating hyperplane. Otherwise, we invoke the algorithm of Lemma~\ref{lemma:separate-or-nonempty_lottery}.
The algorithm either returns a constraint in the inequality description of $\q(4r)$ violated by $(\alpha,\mu)$, which solves the separation problem, or certifies $(\alpha,\mu)\in \q(r)$. If, at any iteration of the ellipsoid method, the separation oracle is called for a point $(\alpha,\mu)$ for which Lemma~\ref{lemma:separate-or-nonempty_lottery} certifies $(\alpha,\mu) \in \q(r)$, then Lemma~\ref{lemma:empty-or-zero} implies $\PLP(r)$ is infeasible. Thus, there is no solution to the considered Fair \gckc instance of radius $r$.
Hence, consider from now on that the separation oracle always returns a separating hyperplane, in which case the ellipsoid method certifies that $\q(4r) = \emptyset$ as follows.
Let $\mathcal{H}\subseteq \mathcal{F}(4r)$ be the family of all sets $C\in \mathcal{F}(4r)$ returned by Lemma~\ref{lemma:separate-or-nonempty_lottery} through calls to the separation oracle. Then, the following polyhedron:
\begingroup
\everymath={\displaystyle}
\begin{equation*}
\mathcal{Q}_{\mathcal{H}}(4r) = 
\left\{
(\alpha, \mu) \in \mathbb{R}_{\geq 0}^{X} \times \mathbb{R}
\, \middle \vert \,  
\begin{array}{rcl@{\quad}l}
\sum_{u \in X} p(u)\alpha(u) & \geq & \mu + 1\\
\sum_{u \in B(C,4r)}^{} \alpha(u) &\leq& \mu &\forall C \in \mathcal{H}
\end{array}
\right\}
\enspace,
\end{equation*}
\endgroup
which clearly contains $\mathcal{Q}(4r)$, is empty. As the encoding length of any constraint in the inequality description of $\mathcal{Q}(4r)$ is polynomially bounded in the input, the ellipsoid method runs in polynomial time (see Theorem 6.4.9 of~\cite{schrijver2012geometric}). In particular, the number of calls to the separation oracle, and thus $|\mathcal{H}|$, is polynomially bounded.

As $\mathcal{Q}(4r) \subseteq \mathcal{Q}_{\mathcal{H}}(4r) = \emptyset$, Lemma~\ref{lemma:empty-or-zero} implies that $PLP(4r)$ is feasible. More precisely, because $Q_{\mathcal{H}}(4r)=\emptyset$, the linear program obtained from $DLP(4r)$ by replacing $\mathcal{F}(4r)$, which parameterizes the constraints in $DLP(4r)$, by $\mathcal{H}$, has optimal value equal to $0$. Hence, its dual, which corresponds to $PLP(4r)$ where we replace $\mathcal{F}(4r)$ by $\mathcal{H}$, is feasible. As this feasible linear program has polynomial size, because $|\mathcal{H}|$ is polynomially bounded, we can solve it efficiently to obtain a distribution with the desired properties. Moreover, the total running time is $\poly(L) \cdot |X|^{O(\gamma)}$, where $L$ is the encoding length of the input.  
\end{proof}

\subsection{Proof of Lemma~\ref{lemma:separate-or-nonempty_lottery}}\label{app:lottery}

The desired separation algorithm requires us to find a solution for a $\gckc$ instance with an extra covering constraint; the procedure of Section~\ref{sec:main_algorithm} generalizes to handle this extra constraint. We follow similar steps as in Figure~\ref{fig:flowchart}.

Let $(\alpha,\mu) \in \mathbb{Q}_{\geq 0}^{X} \times \mathbb{Q}$ be a point satisfying $\sum_{u\in X}p(u)a(u) \geq \mu + 1$, let $r \geq 0$, and, moreover, let
\begin{equation*}
\mathcal{F}^{\alpha,\mu}(r) \coloneqq \left\{C\in \mathcal{F}(r)\, \middle\vert\, \sum_{u\in B(C,r)} \alpha(u) >\mu \right\}\enspace.
\end{equation*}
Hence, to prove Lemma~\ref{lemma:separate-or-nonempty_lottery}, we need to find a procedure that either certifies $\mathcal{F}^{\alpha,\mu}(r)=\emptyset$ or returns a set $C\in \mathcal{F}^{\alpha,\mu}(4r)$.
To avoid technical complications later on due to the strict inequality in the definition of $\mathcal{F}^{\alpha,\mu}(r)$, we observe, using standard techniques, that one can efficiently compute a polynomially encoded $\varepsilon >0$ to replace the inequality $\sum_{u\in B(C,r)}\alpha(u) > \mu$ by $\sum_{u\in B(C,r)} \alpha(u) \geq \mu + \epsilon$.
\begin{lemma}\label{lemma:eps-gap}
Let $(\alpha,\mu) \in \mathbb{Q}_{\geq 0}^X \times \mathbb{Q}$. 
Then one can efficiently compute an $\varepsilon > 0$  with encoding length $O(L)$, where $L$ is the encoding length of $(\alpha,\mu)$, such that the following holds: For any $C\in \mathcal{F}(r)$, we have $\sum_{u\in B(C,r)}\alpha(u) >\mu$ if and only if $\sum_{u\in B(C,r)} \alpha(u) \geq \mu+\varepsilon$. 
\end{lemma}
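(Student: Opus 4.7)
The plan is to exploit the rationality of $\alpha$ and $\mu$. Write each coordinate $\alpha(u)$ in the form $a(u)/b(u)$ with $a(u), b(u) \in \mathbb{Z}$ and $b(u) > 0$, and write $\mu = p/q$ with $q > 0$. Set
\begin{equation*}
D \coloneqq q \cdot \prod_{u \in X} b(u),
\end{equation*}
which is a positive integer that can be computed in polynomial time from the input. The encoding length of $D$ is at most $\log q + \sum_{u \in X} \log b(u) = O(L)$, since encoding lengths of rational numbers are additive under multiplication.

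The key observation is that, by construction, $D \cdot \alpha(u) \in \mathbb{Z}$ for every $u \in X$ and $D \cdot \mu \in \mathbb{Z}$. Consequently, for any set $C \in \mathcal{F}(r)$, the quantity $D \cdot \sum_{u \in B(C,r)} \alpha(u)$ is an integer. Now suppose $\sum_{u \in B(C,r)} \alpha(u) > \mu$; then $D \sum_{u \in B(C,r)} \alpha(u)$ and $D\mu$ are two integers with the former strictly greater than the latter, hence they must differ by at least $1$. Dividing by $D$ yields $\sum_{u \in B(C,r)} \alpha(u) \geq \mu + 1/D$.

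I therefore set $\varepsilon \coloneqq 1/D$. The forward direction of the claimed equivalence is precisely what we just argued, and the reverse direction is trivial since $\varepsilon > 0$. The encoding length of $\varepsilon$ equals the encoding length of $D$, which is $O(L)$, completing the proof.

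There is no substantive obstacle here: the lemma is a standard gap argument for rationals, and the only thing to check carefully is that the common denominator $D$ has polynomial encoding length. This is immediate from the fact that $\log D \leq \log q + \sum_u \log b(u)$ is bounded by the total bit-length of the input.
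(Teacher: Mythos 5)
Your proof is correct and essentially identical to the paper's: both take $\varepsilon$ to be the reciprocal of the product of all denominators of $\alpha$ and $\mu$, observe that any partial sum of the $\alpha(u)$ minus $\mu$ is an integer multiple of this reciprocal, and bound the encoding length by additivity of logarithms of the denominators.
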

\begin{proof}
The tuple $(\alpha,\mu)$ consists of $|X|+1$ rationals $\left\{\sfrac{p_i}{q_i} \right\}_{i \in [N]}$, with $p_i \in \mathbb{Z}$ and $q_i \in \mathbb{Z}_{>0}$. Let $\Pi = \prod_{i \in [N]} q_i$. Note that if $\sum_{u\in B(C,r)} \alpha(u) > \mu$, then $\sum_{u\in B(C,r)} \alpha(u) - \mu \geq \frac{1}{\Pi}$. Thus, we set $\varepsilon = \sfrac{1}{\Pi}$. Moreover $\log \Pi = \sum_{i \in [N]} \log q_i$, and so the encoding length of $\varepsilon$ is $O(L)$.   
\end{proof}

Let $\fp$ be the following modified relaxation of \gckc, defined for given $(\alpha, \mu) \in \mathbb{Q}_{\geq 0}^X \times \mathbb{Q}$, and a corresponding $\varepsilon > 0$ as per Lemma~\ref{lemma:eps-gap}, where the polytope $\p$ is defined for a fixed radius $r$, as in Section~\ref{sec:main_algorithm} (see~\eqref{eq:defP}):
\begin{equation*}
\fp \coloneqq \left\{ (x,y) \in \p \,\middle\vert\, \;\sum_{u\in X} \alpha(u) x(u)  \geq \mu +\varepsilon\right\}\enspace. 
\end{equation*}
Let $\fip \coloneqq \conv\left(\fp \cap (\{0,1\}^X \times \{0,1\}^X)\right)$ be the integer hull of $\fp$. We now state the following straightforward observation, whose proof is an immediate consequence of the definitions of the corresponding polytopes and Lemma~\ref{lemma:eps-gap}.
\begin{observation}\label{lemma:aux-equiv}
Let $(\alpha, \mu)\in \mathbb{Q}_{\geq 0}^X \times \mathbb{Q}$ be such that $\sum_{u\in X} p(u)\alpha(u) \geq \mu + 1$ and $\fip=\emptyset$. Then $(\alpha, \mu)\in \q(r)$.
\end{observation}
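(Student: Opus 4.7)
The plan is to prove the contrapositive: if $(\alpha,\mu)\notin \q(r)$, then $\fip\neq\emptyset$. The hypotheses already give $\alpha\geq 0$ and $\sum_{u\in X}p(u)\alpha(u)\geq \mu+1$, so the only way $(\alpha,\mu)$ can fail to lie in $\q(r)$ is that some $C\in \mathcal{F}(r)$ violates $\sum_{u\in B(C,r)}\alpha(u)\leq \mu$; equivalently, $C\in \mathcal{F}^{\alpha,\mu}(r)$. It therefore suffices to exhibit, from any such $C$, an integral point of $\fp$.

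Concretely, I would take any $C\in \mathcal{F}^{\alpha,\mu}(r)$, let $y\in\{0,1\}^X$ be the characteristic vector of $C$, and let $x\in\{0,1\}^X$ be the characteristic vector of $B(C,r)$. Next, I would verify in turn the defining inequalities of $\p$: the budget constraint $\sum_{v\in X}y(v)=|C|\leq k$ follows from $C\in \mathcal{F}(r)$; for every $u\in X$ with $x(u)=1$ we have $u\in B(C,r)$, hence $\sum_{v\in B(u,r)}y(v)\geq 1=x(u)$ (and the inequality is trivial when $x(u)=0$); and for every color $\ell\in[\gamma]$, $\sum_{u\in X_\ell}x(u)=|B(C,r)\cap X_\ell|\geq m_\ell$ again because $C\in \mathcal{F}(r)$. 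Finally, to see that $(x,y)$ satisfies the extra inequality defining $\fp$, I would invoke Lemma~\ref{lemma:eps-gap}: since $\sum_{u\in B(C,r)}\alpha(u)>\mu$, we actually have $\sum_{u\in B(C,r)}\alpha(u)\geq \mu+\varepsilon$, which is precisely $\sum_{u\in X}\alpha(u)x(u)\geq \mu+\varepsilon$.

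Having constructed $(x,y)\in \fp\cap(\{0,1\}^X\times\{0,1\}^X)$, we obtain $\fip\neq\emptyset$, contradicting the hypothesis and completing the contrapositive. I do not anticipate a real obstacle: the observation is essentially a direct unpacking of definitions, and the only subtlety is bridging between the strict inequality in the definition of $\mathcal{F}^{\alpha,\mu}(r)$ and the non-strict inequality baked into $\fp$, which is exactly the role Lemma~\ref{lemma:eps-gap} is designed to play.
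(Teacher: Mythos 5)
Your proof is correct and matches the paper's intent: the paper states the observation is ``an immediate consequence of the definitions of the corresponding polytopes and Lemma~\ref{lemma:eps-gap}'' and gives no further detail, and you have spelled out exactly that immediate argument — taking $C\in\mathcal{F}^{\alpha,\mu}(r)$ witnessing failure of $(\alpha,\mu)\in\q(r)$, building the integral point $(\chi^{B(C,r)},\chi^{C})\in\fp$, and using Lemma~\ref{lemma:eps-gap} to bridge the strict inequality $>\mu$ to the non-strict $\geq\mu+\varepsilon$.
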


The following lemma is a slightly modified version of Theorem~\ref{lemma:small-y-sum}, which is also a direct consequence of Theorem~\ref{thm:general-sparse} given in Appendix~\ref{appendix:missing-proofs}.
\begin{lemma}
\label{lemma:small-y-sum_lottery}
Let $(\alpha,\mu) \in \mathbb{Q}_{\geq0}^X\times\mathbb{Q}$, let $(x, y) \in \fp$, and let $(S, \mathcal{D})$ be an $(x,y)$-good partition. If $y(B(S, r)) \leq k - \gamma$, a set $C\in \mathcal{F}^{\alpha,\mu}(4r)$ can be found in polynomial time.
\end{lemma}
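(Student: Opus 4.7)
The plan is to adapt the polyhedral sparsity argument underlying Theorem~\ref{lemma:small-y-sum} (equivalently Theorem~\ref{thm:general-sparse} in the appendix) by treating the additional constraint $\sum_{u\in X}\alpha(u)x(u)\geq \mu+\varepsilon$ carried by $\fp$ as one extra covering constraint, on par with the $\gamma$ color constraints. This is precisely the reason the hypothesis tightens from $y(B(S,r))\leq k-\gamma+1$ (in Theorem~\ref{lemma:small-y-sum}) to $y(B(S,r))\leq k-\gamma$ here: the sparsity bound now permits one more fractional entry, which must be absorbed by a correspondingly smaller fractional budget.

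Concretely, I would set up a restricted LP on variables $z\in[0,1]^S$ with objective $\min\sum_i z(s_i)$, subject to
\begin{equation*}
    \sum_{i=1}^q z(s_i)\,|D_i\cap X_\ell|\geq m_\ell \quad \forall\ell\in[\gamma], \qquad \sum_{i=1}^q z(s_i)\,\alpha(D_i)\geq \mu+\varepsilon,
\end{equation*}
where $\alpha(D_i):=\sum_{u\in D_i}\alpha(u)$, and then verify that the natural assignment $\tilde z(s_i):=y(B(s_i,r))$ is feasible with objective value $\sum_i\tilde z(s_i)=y(B(S,r))\leq k-\gamma$. Feasibility in the color constraints follows by splitting $\sum_{u\in X_\ell}x(u)\geq m_\ell$ across the partition $\mathcal{D}$ and applying property~(iii) of the $(x,y)$-good partition to bound each $x(u)$ with $u\in D_i$ by $y(B(s_i,r))$; the analogous bookkeeping, using $\alpha(u)\geq 0$ as weights, yields feasibility of the $\alpha$-constraint. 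Disjointness of the radius-$r$ balls $B(s_i,r)$, which implies $\sum_i y(B(s_i,r))=y(B(S,r))$, follows from the separation property~(i).

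The next step is rounding. By the standard rank argument, an extreme optimal solution $z^*$ has at most $\gamma+1$ fractional entries, one for each of the $\gamma+1$ non-box constraints. Writing $F=\{i:0<z^*(s_i)<1\}$ and $I=\{i:z^*(s_i)=1\}$, the crucial counting step is: if $F\neq\emptyset$, then $\sum_{i\in F}z^*(s_i)>0$, so integrality of $|I|$ together with $|I|+\sum_F z^*(s_i)\leq k-\gamma$ forces $|I|\leq k-\gamma-1$, whence $|I|+|F|\leq(k-\gamma-1)+(\gamma+1)=k$; if $F=\emptyset$, then $z^*$ is already integer with $|I|\leq k-\gamma$. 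In either case, setting $C:=\{s_i:i\in I\cup F\}$ yields $|C|\leq k$.

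It remains to confirm $C\in\mathcal{F}^{\alpha,\mu}(4r)$. Using $D_i\subseteq B(s_i,4r)$ (property~(ii)) and disjointness of the clusters, for each color $\ell$ we get $|B(C,4r)\cap X_\ell|\geq\sum_{s_i\in C}|D_i\cap X_\ell|\geq\sum_i z^*(s_i)|D_i\cap X_\ell|\geq m_\ell$, and analogously $\sum_{u\in B(C,4r)}\alpha(u)\geq\sum_i z^*(s_i)\alpha(D_i)\geq\mu+\varepsilon>\mu$, so $C\in\mathcal{F}^{\alpha,\mu}(4r)$. The only delicate point, and the main thing to get right, is the counting of fractional variables described above; everything else parallels the proof of Theorem~\ref{thm:general-sparse} verbatim, so I expect the cleanest write-up simply to instantiate that general theorem with the $\alpha$-constraint added to its list of covering requirements.
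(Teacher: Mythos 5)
Your proposal is correct and matches the paper's approach: the paper also obtains this lemma directly by instantiating Theorem~\ref{thm:general-sparse} with $t=\gamma+1$ covering constraints (the $\gamma$ color constraints plus the $\alpha$-constraint), which is precisely why the budget tightens from $k-\gamma+1$ to $k-\gamma$. One small detail to fix in the write-up: the feasible point should be $z(s_i)=\min\{1,y(B(s_i,r))\}$ rather than $y(B(s_i,r))$ so that the box constraints $z\in[0,1]^S$ are respected, exactly as in the proof of Theorem~\ref{thm:general-sparse}; the covering constraints remain satisfied because $x(u)\leq 1$.
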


If $y(B(S,r)) \leq k - \gamma$, then Lemma~\ref{lemma:small-y-sum_lottery} leads to a set $C \in \mathcal{F}(4r)$ that satisfies $\sum_{u \in B(C,4r)} \alpha(u) > \mu $; this gives a constraint separating $(\alpha,\mu)$ from $\q(4r)$.

It remains to consider the case $y(B(S,r))>k-\g$. As in Section~\ref{sec:main_algorithm}, we can either find a set $C_2\in \mathcal{F}^{\alpha, \mu}(2r)$ or certify that every $C_1\in \mathcal{F}^{\alpha,\mu}(r)$ satisfies $|C_1\cap B(S,r)|\leq k-\g$.

\begin{lemma}\label{lemma:separate_lottery}
Let $(\alpha,\mu) \in \mathbb{Q}_{\geq0}^X\times\mathbb{Q}$, $S\subseteq X$ with $d(s,s')>4r$ for all $s, s' \in S$ with $s\neq s'$, and $\tau \in \{0, \ldots, k-1\}$. If there is a set $C_1\in \mathcal{F}^{\alpha,\mu}(r)$ with $|C_1\cap B(S,r)|> \tau$, then there is a set $C_2\in \mathcal{F}^{\alpha,\mu}(2r)$ with $|C_2 \setminus S| \leq k - \tau - 1$.
\end{lemma}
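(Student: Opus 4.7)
The plan is to mimic the proof of Lemma~\ref{lemma:separate} verbatim, and then verify that the additional $\alpha$-weight inequality is inherited for free from the nonnegativity of $\alpha$. Specifically, set $A \coloneqq C_1 \cap B(S,r)$ and, using that the pairwise distances in $S$ exceed $4r > 2r$, observe that for every $p \in A$ there is a unique $\phi(p) \in S$ with $p \in B(\phi(p), r)$. Define $C_2 \coloneqq \phi(A) \cup (C_1 \setminus A)$.

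The bookkeeping arguments from Lemma~\ref{lemma:separate} then carry through unchanged: $|\phi(A)| \leq |A|$ yields $|C_2| \leq |C_1| \leq k$, and $|C_2 \setminus S| = |C_1 \setminus B(S,r)| \leq k - |A| - 1 + 1 \leq k - \tau - 1$ after using the hypothesis $|A| > \tau$. Moreover, for every $p \in A$, the triangle inequality gives $B(p,r) \subseteq B(\phi(p), 2r)$, hence $B(C_1, r) \subseteq B(C_2, 2r)$. This containment alone is enough to conclude that $C_2 \in \mathcal{F}(2r)$: we have $|B(C_2, 2r) \cap X_\ell| \geq |B(C_1, r) \cap X_\ell| \geq m_\ell$ for every color $\ell \in [\gamma]$ since $C_1 \in \mathcal{F}(r) \supseteq \mathcal{F}^{\alpha,\mu}(r)$.

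It remains to upgrade the conclusion from $C_2 \in \mathcal{F}(2r)$ to $C_2 \in \mathcal{F}^{\alpha,\mu}(2r)$, i.e., to show $\sum_{u \in B(C_2, 2r)} \alpha(u) > \mu$. This is where the only new ingredient enters, but it is immediate: since $\alpha \in \mathbb{Q}_{\geq 0}^X$ and $B(C_1, r) \subseteq B(C_2, 2r)$, we have
\begin{equation*}
\sum_{u \in B(C_2, 2r)} \alpha(u) \;\geq\; \sum_{u \in B(C_1, r)} \alpha(u) \;>\; \mu,
\end{equation*}
where the strict inequality uses $C_1 \in \mathcal{F}^{\alpha,\mu}(r)$. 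There is no real obstacle here---the lemma is a direct generalization of Lemma~\ref{lemma:separate}, and nonnegativity of $\alpha$ is precisely the hypothesis that makes the weighted constraint monotone under ball enlargement, so the rerouting of centers from $A$ to $\phi(A)\subseteq S$ can only help (or preserve) the $\alpha$-mass covered.
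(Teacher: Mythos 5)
Your proof is correct and takes precisely the approach the paper has in mind: the paper omits the proof of this lemma, stating it is identical to that of Lemma~\ref{lemma:separate}, and your argument reproduces that proof and then correctly notes that the additional $\alpha$-weight condition follows for free from $B(C_1,r)\subseteq B(C_2,2r)$ and $\alpha\geq 0$. (Minor cosmetic point: the intermediate expression ``$k-|A|-1+1$'' is just $k-|A|$; the chain is $|C_1\setminus B(S,r)| = |C_1|-|A| \leq k-|A| \leq k-\tau-1$.)
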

%
%
%
%
%
%
The proof of the above lemma is identical to the proof of Lemma~\ref{lemma:separate}, and thus is omitted.

\begin{lemma}\label{lemma:DP_lottery}
Let $(\alpha,\mu) \in \mathbb{Q}_{\geq0}^X\times\mathbb{Q}$, $S\subseteq X$ with $d(s,s')>4r$ for all $s,s'\in S$ with $s \neq s'$, and $\beta \in \mathbb{Z}_{\geq 0}$.
If there exists a set $C\in  \mathcal{F}^{\alpha,\mu}(2r)$ with $|C\setminus S|\leq \beta $, then we can find such a set in time $|X|^{O(\beta + \g)}$.
\end{lemma}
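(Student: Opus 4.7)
The plan is to mirror the proof of Lemma~\ref{lemma:DP} almost verbatim, the only new ingredient being a single additional linear constraint that encodes the $\alpha$-weight requirement $\sum_{u \in B(C,2r)} \alpha(u) > \mu$. First, I would invoke Lemma~\ref{lemma:eps-gap} to replace this strict inequality by $\sum_{u \in B(C,2r)} \alpha(u) \geq \mu + \varepsilon$, where $\varepsilon > 0$ has encoding length polynomial in the input. Then, as in Lemma~\ref{lemma:DP}, I would guess $Q \coloneqq C \setminus S$ by enumeration; because $|Q| \leq \beta$, there are $|X|^{O(\beta)}$ such candidates.

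Given such a $Q$, it remains to choose $W \subseteq S$ with $|W| \leq k - |Q|$ so that $Q \cup W \in \mathcal{F}^{\alpha,\mu}(2r)$. The key geometric fact used in Lemma~\ref{lemma:DP} still applies: since the centers in $S$ are pairwise more than $4r$ apart, the balls $\{B(s,2r)\}_{s \in S}$ are pairwise disjoint. Writing $b_s^\ell \coloneqq |(B(s,2r) \setminus B(Q,2r)) \cap X_\ell|$ and $a_s \coloneqq \sum_{u \in B(s,2r) \setminus B(Q,2r)} \alpha(u)$ for each $s \in S$, the search for $W$ then amounts to finding $z \in \{0,1\}^S$ satisfying
\begin{align*}
\sum_{s \in S} b_s^\ell \, z(s) & \;\geq\; m_\ell - |B(Q,2r) \cap X_\ell| \quad \forall \ell \in [\gamma], \\
\sum_{s \in S} a_s \, z(s) & \;\geq\; \mu + \varepsilon - \sum_{u \in B(Q,2r)} \alpha(u), \\
\sum_{s \in S} z(s) & \;\leq\; k - |Q|,
\end{align*}
which is precisely the integer program from the proof of Lemma~\ref{lemma:DP} augmented with one extra inequality.

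The main obstacle is that the coefficients $a_s$ and the corresponding right-hand side are rationals of potentially large magnitude rather than small integers, so one cannot simply treat the $\alpha$-inequality as a ``$(\gamma+1)$-st color'' and invoke Theorem~\ref{thm:dp} out of the box. I would sidestep this by treating the $\alpha$-inequality as an objective rather than as a state dimension: the dynamic program indexes states by the tuple of color coverages (each truncated at the corresponding threshold and hence taking at most $|X|+1$ values) together with the number of already-chosen centers (in $\{0,\ldots,k\}$), for a total of $|X|^{O(\gamma)}$ states, and at each state it stores the maximum value of $\sum_{s \in W} a_s$ attainable. A standard left-to-right induction over $S$ fills all these entries in $|X|^{O(\gamma)}$ time, and the desired $W$ exists if and only if some feasible state attains $\alpha$-value at least $\mu + \varepsilon - \sum_{u \in B(Q,2r)} \alpha(u)$. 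Multiplying by the $|X|^{O(\beta)}$ guesses of $Q$ yields the advertised running time $|X|^{O(\beta + \gamma)}$.
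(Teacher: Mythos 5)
Your proof is correct and takes essentially the same approach as the paper: guess $Q=C\setminus S$, then solve a dynamic program over $S$ indexed by (truncated) color-coverage counts and the number of chosen centers, with $\sum_{s\in W}\alpha(B(s,2r)\setminus B(Q,2r))$ as the \emph{objective} to maximize rather than as an additional state dimension. The paper dispatches this in one line by noting the resulting problem is an instance of Theorem~\ref{thm:dp} (which already allows arbitrary non-negative real objective weights); you spell out explicitly why the $\alpha$-inequality cannot be handled as a ``$(\gamma{+}1)$-st color'' and how to fold it into the objective instead, and you (harmlessly but unnecessarily) invoke Lemma~\ref{lemma:eps-gap} where simply checking that the DP's optimum exceeds $\mu$ would suffice.
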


\begin{proof}
As in the proof of Lemma~\ref{lemma:DP}, we first guess up to $\beta $ centers $Q \subseteq X\setminus S$. For each of those guesses, we consider the binary program~\eqref{binary-program} with objective function  
$\sum_{s\in S} z(s) \cdot \alpha(B(s,2r) \setminus B(Q,2r))$ to be maximized. Again, this is a special case of the binary program presented in Theorem~\ref{thm:dp}, given in Appendix~\ref{appendix:missing-proofs}, and thus can be solved in time $|X|^{O(\gamma)}$. For the guess $Q=C\setminus S$, the characteristic vector $\chi^{C\cap S}$ is feasible for this binary program, implying that the optimal centers $Z\subseteq S$ chosen by the binary program fulfill $Z\cup Q \in \mathcal{F}^{\alpha,\mu}(2r)$.   
\end{proof}

\begin{corollary}\label{cor:inner-ellipsoid-sep}
Let $(\alpha,\mu) \in \mathbb{Q}_{\geq0}^X\times\mathbb{Q}$. There is an algorithm that, given $(x,y)\in \mathbb{R}^X \times \mathbb{R}^X$, either returns a set $C\in \mathcal{F}^{\alpha,\mu}(4r)$ or returns a hyperplane separating $(x,y)$ from $\fip$. The running time of the algorithm is $\poly(L) \cdot |X|^{O(\gamma)}$, where $L$ is the encoding length of the input.
\end{corollary}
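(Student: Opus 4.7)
The plan mirrors the round-or-cut structure of Section~\ref{sec:main_algorithm}, with three adjustments: the polytope $\p$ is replaced by $\fp$, the feasible family $\mathcal{F}(r)$ by the weighted family $\mathcal{F}^{\alpha,\mu}(r)$, and the cluster-mass threshold $k-\gamma+1$ by $k-\gamma$ (this shift is dictated by the hypothesis of Lemma~\ref{lemma:small-y-sum_lottery}). As a preliminary step, I would precompute via Lemma~\ref{lemma:eps-gap} an $\varepsilon > 0$ of encoding length $O(L)$, so that the strict inequality defining $\mathcal{F}^{\alpha,\mu}(\cdot)$ is faithfully captured by the closed linear constraint $\sum_{u\in X}\alpha(u) x(u) \geq \mu + \varepsilon$ appearing in the description of $\fp$.

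Given an input $(x,y)\in \mathbb{R}^X\times\mathbb{R}^X$, I would first test membership in $\fp$ by checking each of its defining inequalities; if any is violated, return it as the separating hyperplane, valid because $\fip \subseteq \fp$. Otherwise $(x,y)\in \fp$, and I would invoke Algorithm~\ref{alg:partition} (justified by Theorem~\ref{thm:partition-algorithm}) to produce in polynomial time an $(x,y)$-good partition $(S,\mathcal{D})$. At this point, I would branch on the value $y(B(S,r))$. If $y(B(S,r))\leq k-\gamma$, apply Lemma~\ref{lemma:small-y-sum_lottery} to obtain a set $C\in \mathcal{F}^{\alpha,\mu}(4r)$ in polynomial time, and return it.

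Otherwise $y(B(S,r)) > k-\gamma$. Apply Lemma~\ref{lemma:DP_lottery} with $\beta = \gamma-1$ to decide, in time $|X|^{O(\gamma)}$, whether there exists a set $C\in \mathcal{F}^{\alpha,\mu}(2r)$ with $|C\setminus S|\leq \gamma-1$. If one is found, return it: since $\alpha\geq 0$ and $B(C,2r)\subseteq B(C,4r)$, it automatically lies in $\mathcal{F}^{\alpha,\mu}(4r)$. If no such $C$ exists, invoke the contrapositive of Lemma~\ref{lemma:separate_lottery} with $\tau = k-\gamma$ (so that $k-\tau-1 = \gamma-1$ matches the value of $\beta$ used above): every $C_1\in \mathcal{F}^{\alpha,\mu}(r)$ must satisfy $|C_1\cap B(S,r)|\leq k-\gamma$. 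Since every integer point of $\fp$ corresponds to a set in $\mathcal{F}^{\alpha,\mu}(r)$, the inequality $y(B(S,r))\leq k-\gamma$ is valid for $\fip$ and is violated by our $(x,y)$, so it serves as the separating hyperplane.

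The main subtlety I anticipate is purely bookkeeping: one has to check that the shifted threshold ($k-\gamma$ rather than $k-\gamma+1$) is consistent with the combined use of Lemmas~\ref{lemma:DP_lottery} and~\ref{lemma:separate_lottery}, and that the strict inequality defining $\mathcal{F}^{\alpha,\mu}(\cdot)$ is handled uniformly through the $\varepsilon$-gap from Lemma~\ref{lemma:eps-gap}. The running time is dominated by the dynamic program at $|X|^{O(\gamma)}$, with a $\poly(L)$ factor coming from the encoding length of $\varepsilon$ and from rational arithmetic, yielding the claimed $\poly(L)\cdot |X|^{O(\gamma)}$ bound.
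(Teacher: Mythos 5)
Your proof is correct and follows the same round-or-cut structure as the paper's own proof of the corollary: membership check in $\fp$, the $(x,y)$-good partition via Algorithm~\ref{alg:partition}, the $y(B(S,r))\leq k-\gamma$ branch handled by Lemma~\ref{lemma:small-y-sum_lottery}, the DP with $\beta=\gamma-1$, and finally the contrapositive of Lemma~\ref{lemma:separate_lottery} to obtain the separating hyperplane $y(B(S,r))\leq k-\gamma$. One minor note in your favor: you correctly invoke Lemma~\ref{lemma:separate_lottery} with $\tau=k-\gamma$ (so that $k-\tau-1=\gamma-1$ matches $\beta$), whereas the paper's text writes ``$\tau=\gamma-1$''; the latter appears to be a typo, since it would give $k-\tau-1=k-\gamma$, which does not match the bound $\gamma-1$ used in the DP, and your choice $\tau=k-\gamma\in\{1,\dots,k-1\}$ (using $1\leq\gamma<k$) is the one that yields the stated conclusion $|C_1\cap B(S,r)|\leq k-\gamma$.
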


\begin{proof}
If $(x,y)\notin \fp$, we return a violated constraint separating $(x,y)$ from $\fp \supseteq \fip$. Hence we assume $(x,y)\in \fp$. Since $\fp \subseteq \p$, we can use Theorem~\ref{thm:partition-algorithm} to get an $(x,y)$-good partition $(S,\mathcal{D})$.
If $y(B(S,r))\leq k-\g$, Lemma~\ref{lemma:small-y-sum_lottery} gives a set $C\in \mathcal{F}^{\alpha,\mu}(4r)$. So, assuming $y(B(S,r))> k-\g$, we use Lemma~\ref{lemma:DP_lottery} (with $\beta=\gamma -1$) to check whether there is $C_2\in \mathcal{F}^{\alpha,\mu}(2r)$ with $|C_2\setminus S| \leq \g-1$. If this is the case, we are done because $\mathcal{F}^{\alpha,\mu}(2r)\subseteq \mathcal{F}^{\alpha,\mu}(4r)$. If not, the contrapositive of Lemma~\ref{lemma:separate_lottery} (with $\tau=\g-1$) implies that every $C_1\in \mathcal{F}^{\alpha,\mu}(r)$ fulfills $|C_1\cap B(S,r)|\leq k-\g$. Hence, every point $(\overline{x},\overline{y})\in \fip$ satisfies $\overline{y}(B(S,r))\leq k-\g$. However, this constraint is violated by $(x,y)$, and it thus separates $(x,y)$ from $\fip$.   
\end{proof}

\begin{proof}[Proof of Lemma~\ref{lemma:separate-or-nonempty_lottery}]
We use the ellipsoid method to check emptiness of $\fip$. Whenever the separation oracle gets called for a point $(x,y)\in \mathbb{R}^X \times \mathbb{R}^X$, we invoke the algorithm of Corollary~\ref{cor:inner-ellipsoid-sep}. If the algorithm returns at any point a set $C\in \mathcal{F}^{\alpha,\mu}(4r)$, then $C$ corresponds to a constraint in the inequality description of $\q(4r)$ violated by $(\alpha,\mu)$. Otherwise, the ellipsoid method certifies that $\fip=\emptyset$, which implies $(\alpha,\mu)\in \q(r)$ by Observation~\ref{lemma:aux-equiv}. Note that the number of iterations of the ellipsoid method is polynomial as the separating hyperplanes used by the procedure above have encoding length $\poly(L)$, where $L$ is the encoding length of the input (see Theorem 6.4.9 of~\cite{schrijver2012geometric}). Thus, the total running time is $\poly(L) \cdot |X|^{O(\gamma)}$, where $L$ is the encoding length of the input.   
\end{proof}

\section{Hardness results for Colorful $k$-Center}
\label{sec:hardness}

 We now prove our hardness results. We start in Subsection~\ref{subsec:mainHardness} by showing Theorem~\ref{thm:mainHardness}, i.e., that $\gckc$ becomes hard to approximate when the number of colors is unbounded. In Subsection~\ref{subsec:bicriteria}, we then prove   Theorem~\ref{thm:bicriteria-hardness}, which shows our bi-criteria inapproximability result, i.e., there is an approximation hardness even when one is allowed to slightly exceed the number of centers to be opened.

\subsection{Hardness of approximation for $\gckc$}\label{subsec:mainHardness}
In this section, we prove our main hardness result, Theorem~\ref{thm:mainHardness}. For that, we use  a reduction from Vertex Cover on graphs of maximum degree $3$.
\begin{definition}
Let $G = (V, E)$ be a graph of maximum degree $3$, and let $t \in [n]$, where $n = |V|$. The decision version of Vertex Cover asks to decide whether there exists a vertex cover $S \subseteq V$ of size at most $t$. We denote this problem as $\VCthree(G, t)$.
\end{definition}
We will use the following hardness results for $\VCthree(G, t)$.
\begin{theorem}[\cite{Garey:1974:SNP:800119.803884,DBLP:journals/jcss/CaiJ03}]\label{thm:VC-hardness}
$ $

\begin{enumerate}
    \item There is no algorithm for $\VCthree(G, t)$ that runs in polynomial time, assuming that $\P \neq \NP$.\label{thm: hardness-i}
    \item There is no algorithm for $\VCthree(G,t)$ that runs in time $2^{o(t)} \poly(n)$, assuming the Exponential Time Hypothesis.\label{thm: hardness-ii}
\end{enumerate}
\end{theorem}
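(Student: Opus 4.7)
The plan is to prove both parts of Theorem~\ref{thm:VC-hardness} via a single, tightly-controlled reduction chain starting at 3-SAT, tracking parameter sizes carefully enough to also obtain the ETH-bound in part~(ii).

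For part~(i), I would use the classical two-stage construction. First, reduce 3-SAT to 3-SAT-3 (each variable appears in at most three clauses) using the standard occurrence-reduction gadget: for each variable $x$ with $k\ge 4$ occurrences, introduce fresh copies $x^{(1)},\ldots,x^{(k)}$, replace the $i$-th occurrence of $x$ with $x^{(i)}$, and force all copies to agree via the cyclic chain of implications $x^{(i)}\to x^{(i+1)}$ encoded as 2-clauses $\bar x^{(i)}\lor x^{(i+1)}$ (padded to 3-clauses using a dummy literal). Second, reduce 3-SAT-3 (more precisely, the variant in which every literal occurs at most twice, which the first step can produce) to Vertex Cover by the textbook gadget: each variable $x_i$ becomes an edge $\{x_i,\bar x_i\}$, each clause becomes a triangle, and each triangle vertex is joined by a single extra edge to the matching literal vertex. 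In the resulting graph, each triangle vertex has degree exactly $3$ and each literal vertex has degree at most $3$, so the graph has maximum degree $3$. A standard direct argument shows the 3-SAT-3 instance is satisfiable iff this graph has a vertex cover of size $t = n' + 2m'$, where $n',m'$ are the number of variables/clauses after the occurrence-reduction; hence $\VCthree(G,t)$ is NP-hard.

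For part~(ii), I would invoke the Sparsification Lemma of Impagliazzo, Paturi, and Zane, which upgrades ETH to: 3-SAT on $n$ variables and $m$ clauses admits no $2^{o(n+m)}$-time algorithm. The occurrence-reduction turns an instance with parameters $(n,m)$ into one with at most $n+O(m)$ variables and $O(m)$ clauses, since the total number of literal occurrences is at most $3m$. The Vertex Cover gadget then builds a graph on $O(n'+m')=O(n+m)$ vertices with target $t=n'+2m'=O(n+m)$. Composing, any hypothetical $2^{o(t)}\poly(|V(G)|)$ algorithm for $\VCthree(G,t)$ would, after the two linear-size reductions, solve 3-SAT in $2^{o(n+m)}\poly(n+m)$ time, contradicting the sparsified form of ETH.

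The main obstacle is the bookkeeping for part~(ii): the commonly presented textbook reductions focus only on polynomial correctness and are not usually stated in a way that emphasizes constant-factor control. What has to be verified is that each gadget enlarges the instance size by only a constant multiplicative factor and, crucially, that the Vertex Cover threshold $t$ in the final $\VCthree$ instance is linear in the original $n+m$ (rather than, say, polynomial). Once these bookkeeping estimates are in hand, both parts of Theorem~\ref{thm:VC-hardness} follow; part~(i) needs only the correctness of the two reductions, while part~(ii) additionally uses their linearity together with the Sparsification Lemma.
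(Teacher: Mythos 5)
The paper does not prove this theorem at all: it is imported as a black box, part~(i) from Garey, Johnson, and Stockmeyer and part~(ii) from Cai and Juedes, so there is no in-paper argument to compare yours against. Judged on its own terms, your outline is the standard one and is structurally sound: occurrence-reduction of 3-SAT so that every variable occurs at most three times (hence every literal at most twice), the edge/triangle Vertex Cover gadget yielding maximum degree $3$, and, for part~(ii), the Sparsification Lemma of Impagliazzo--Paturi--Zane together with the observation that both reductions are linear-size, so the cover budget satisfies $t=\Theta(n+m)$ and a $2^{o(t)}\poly(|V|)$ algorithm for $\VCthree$ would give a $2^{o(n+m)}$ algorithm for 3-SAT. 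This is essentially a reconstruction of the Cai--Juedes argument, and the bookkeeping you identify as the main obstacle is indeed the only thing that needs care beyond the textbook correctness proofs.

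One step as written would fail, though: padding the consistency clauses $\bar x^{(i)}\lor x^{(i+1)}$ to 3-clauses ``using a dummy literal''. If the dummy is a fresh unconstrained variable, setting it to true satisfies the padded clause and decouples the copies of $x$, so satisfiability is no longer preserved in the backward direction of the reduction. The usual repair, replacing a 2-clause $(a\lor b)$ by $(a\lor b\lor z)\wedge(a\lor b\lor \bar z)$ with $z$ fresh, restores logical equivalence but doubles the occurrences of $a$ and $b$, which destroys the degree-$3$ bound you need in the Vertex Cover graph. The clean fix is not to pad at all: keep the 2-clauses and represent each one by a single edge between its two literal vertices, which is covered exactly when one of those literals is selected as true. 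The budget then becomes $t=n'+2m'_3$, where $m'_3$ is the number of 3-clauses, the maximum degree remains $3$ (each literal vertex has one variable edge plus at most two occurrence edges), and the linear-size accounting for part~(ii) goes through unchanged.
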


Our reduction is now described in the next lemma.
\begin{lemma}\label{lemma:reduction}
Given a $\VCthree(G,t)$ instance, there exists a polynomial-time algorithm that constructs a \gckc instance on the real line with the following properties:
\begin{enumerate}
    \item $|X| = |V|$, $\g = |E|$, and $k = t$,
    \item it has a feasible solution of radius $0$ if and only if $G$ has a vertex cover of size at most $t$.
\end{enumerate}
\end{lemma}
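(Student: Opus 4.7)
The plan is to construct the $\gckc$ instance by a direct vertex-for-vertex, edge-for-color correspondence. Given $G = (V, E)$ with $V = \{v_1, \ldots, v_n\}$, place a point $u_i$ at coordinate $i$ on the real line for each $i \in [n]$, let $X = \{u_1, \ldots, u_n\}$ equipped with the induced Euclidean metric, and for every edge $e = \{v_i, v_j\} \in E$ introduce a color class $X_e = \{u_i, u_j\}$ with requirement $m_e = 1$. Finally, set $k = t$. The required numerical equalities $|X| = |V|$, $\gamma = |E|$, and $k = t$ all hold by construction, and the whole instance is clearly produced in polynomial time.

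The key observation for the equivalence is that distinct points on the real line have strictly positive pairwise distance, so $B(u, 0) = \{u\}$ for every $u \in X$ and hence $B(C, 0) = C$ for every $C \subseteq X$. Therefore the $\gckc$ covering constraint $|B(C, 0) \cap X_e| \geq m_e = 1$ collapses to $C \cap X_e \neq \emptyset$, i.e., $C$ contains at least one endpoint of the edge $e$. A subset $C \subseteq X$, identified with the corresponding vertex subset $S \subseteq V$, is thus a feasible radius-$0$ $\gckc$ solution if and only if $|S| \leq t$ and $S$ meets every edge of $G$, which is precisely the definition of a vertex cover of $G$ of size at most $t$.

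Both directions of the if-and-only-if then follow immediately: from a vertex cover $S$ of size at most $t$, the set $C \subseteq X$ identified with $S$ is a radius-$0$ $\gckc$ solution; conversely, any radius-$0$ $\gckc$ solution $C$ yields, via the same identification, a vertex cover of size at most $t$. The ``main obstacle'' is really just the bookkeeping involved in the collapse of $B(\cdot, 0)$ to the identity at radius zero; there is no genuinely technical step. The reduction is deliberately parameter-preserving, with $|X| = |V|$, $\gamma = |E|$, and $k = t$, which is exactly what will allow Theorem~\ref{thm:VC-hardness} to transfer both the $\P \neq \NP$ hardness and the ETH-based lower bound from $\VCthree$ to $\gckc$ in the subsequent derivation of Theorem~\ref{thm:mainHardness} (noting in particular that on graphs of maximum degree $3$, $|E| \leq 3|V|/2$, so a bound $\gamma \leq f(|X|)$ with $f(n) = \omega(\log n)$ places no restriction on the produced instances).
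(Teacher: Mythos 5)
Your reduction and argument coincide with the paper's: map vertices to integer points on the line, edges to two-point color classes with requirement $1$, set $k=t$, and observe that radius-$0$ balls are singletons so the covering constraints reduce exactly to vertex cover. The proof of the lemma itself is correct. One caveat about your closing parenthetical remark: it is not true that $|E| \leq 3|V|/2$ implies $\gamma \leq f(|X|)$ for every $f(n)=\omega(\log n)$, since $\gamma=|E|$ can be $\Theta(|X|)$ while $f(|X|)$ could be, say, $(\log|X|)^2$; the paper's proof of Theorem~\ref{thm:mainHardness} handles this by padding the instance with colorless dummy points to inflate $|X|$ until $\gamma \leq f(|X|)$ holds, so that step is genuinely needed and not subsumed by this lemma.
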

\begin{proof}
Let $G= (V, E)$ be the graph of the given $\VCthree(G,t)$ instance. We construct a \gckc instance with $|X| = |V|$, $\g = |E|$, and $k = t$ as follows. Let $V = \{u_1,\ldots, u_{|V|}\}$ and $E = \{e_1, \ldots, e_{|E|}\}$. We set $X = \{1, \ldots, |V|\} \subseteq \mathbb{R}$. Each edge $e_\ell = \{u_i, u_j\} \in E$ corresponds to a distinct color $X_\ell = \{i, j\}$. Note that the number of colors of a point $i \in X$ is equal to the degree of the vertex $u_i$ in $G$. We also set the covering requirement for each color to be $m_\ell = 1$. A sample reduction is given in Figure~\ref{fig:hardness}. It is easy to observe that $\VC(G,t)$ is a YES instance if and only if there is a solution to our $\gckc$ instance of radius $0$. This implies the result.    
\end{proof}
\begin{figure}[ht]
\captionsetup{width=.8\linewidth}
\begin{center}
{%
		\begin{tikzpicture}
		\pgfmathsetmacro{\myheight}{2.5}
		\pgfmathsetmacro{\myshift}{0.8}

		\colorlet{c1}{orange!70!yellow}
		\colorlet{c2}{blue!70!black}
		\colorlet{c3}{green!50!black}
		\colorlet{c4}{red}

		\draw[->, line width = 0.5pt, black] (2,1) -- (7,1) {{}};
		
		\foreach \i in {1,...,4}
		{
			\draw[-,line width = 0.5pt, black] (2+\i,1.1) -- (2+\i,0.9)  {{}};
			\node[right] at (1.8+\i,0.6) {\small  \i};
		}

		\node[label={[label distance=0cm]270 :$v_1$}, circle,draw=black,minimum size=10] (v1) at (-1,1) {};
		\node[label={[label distance=0cm]180 :$v_2$}, circle,draw=black,minimum size=10] (v2) at (-2.5,2.5) {};
		\node[label={[label distance=0cm]90 :$v_3$}, circle,draw=black,minimum size=10] (v3) at (-1,4) {};
		\node[label={[label distance=0cm]0 :$v_4$}, circle,draw=black,minimum size=10] (v4) at (0.5,2.5) {};

\begin{scope}[line width=2.5pt]
		\draw[c2] (v2) -- node [midway,above left= 0pt, black] {$b$}  (v3) {{}};	
		\draw[c4] (v1) -- node [midway,below right= -0.9pt, black] {$d$} (v4) {{}};
		\draw[c1] (v1) -- node [midway,below left = +0.8pt, black] {$a$} (v2) {{}};	
		\draw[c3] (v1) -- node [text width=0.4cm,midway, black, left = -7pt] {$c$} (v3) {{}};	
\end{scope}

\begin{scope}[every node/.style={circle,draw=black,minimum size=14,inner sep=0pt}]		
		\node[fill=c1!30] (v1) at (4,1.5) {$a$};
		\node[fill=c1!30] (v1) at (3,1.5) {$a$};

		\node[fill=c2!30] (v1) at (4,2.1) {$b$};
		\node[fill=c2!30] (v1) at (5,1.5) {$b$};
		
		\node[fill=c3!30] (v1) at (3,2.1) {$c$};
		\node[fill=c3!30] (v1) at (5,2.1) {$c$};
		
		\node[fill=c4!30] (v1) at (3,2.7) {$d$};
		\node[fill=c4!30] (v1) at (6,1.5) {$d$};
\end{scope}

		\draw[-, line width = 2pt, black] (2.9,0.9) -- (3.1,1.1) {{}};
		\draw[-, line width = 2pt, black] (3.1,0.9) -- (2.9,1.1) {{}};		
				
		\draw[-, line width = 2pt, black] (4.9,0.9) -- (5.1,1.1) {{}};
		\draw[-, line width = 2pt, black] (5.1,0.9) -- (4.9,1.1) {{}};

		\draw[-, line width = 2pt, black] (-1.1,3.9) -- (-0.9,4.1) {{}};
		\draw[-, line width = 2pt, black] (-0.9,3.9) -- (-1.1,4.1) {{}};		
		\draw[-, line width = 2pt, black] (-1.1,0.9) -- (-0.9,1.1) {{}};
		\draw[-, line width = 2pt, black] (-0.9,0.9) -- (-1.1,1.1) {{}};

		\begin{scope}[xshift=-13mm,yshift=-2mm]
		\draw[-, line width = 2pt, black] (-2.1,-0.1) -- (-1.9,0.1) {{}};
		\draw[-, line width = 2pt, black] (-1.9,-0.1) -- (-2.1,0.1) {{}};
			
		\node[right] at (-1.8,-0.02) {: optimal solution for Vertex Cover and $\gckc$, respectively.};
		\end{scope}
		\end{tikzpicture}
}		
\caption{An example of the reduction for the input graph (with colored edges) on the left with the corresponding instance of $\gckc$ on the right.}
\label{fig:hardness}
\end{center}
\end{figure}%
We are now ready to prove Theorem~\ref{thm:mainHardness}.
\begin{proof}[Proof of Theorem~\ref{thm:mainHardness}]
The first part of the theorem is an immediate consequence of Lemma~\ref{lemma:reduction} and part~\ref{thm: hardness-i} of Theorem~\ref{thm:VC-hardness}.

For the second part, let $f: \mathbb{Z}_{\geq 0} \to \mathbb{Z}_{\geq 0}$ be a function that satisfies $f(n) = \omega(\log n)$ and assume for the sake of contradiction that there is a polynomial-time algorithm $\mathcal{A}$ that can distinguish whether \gckc on the real line with $\gamma \leq f(|X|)$ colors admits a solution of radius $0$. Let $c_1,c_2\in \mathbb{R}_{>0}$ be constants such that the running time of $\mathcal{A}$ is upper bounded by $c_1\cdot |X|^{c_2}$.
We will show how one can use $\mathcal{A}$ to solve $\VCthree(G,t)$ in time $2^{o(t)}\poly(|V|)$, where $G=(V,E)$ is an undirected graph and $t\in [|V|]$. For that, we use the reduction of Lemma~\ref{lemma:reduction}. We end up with an instance $\mathcal{I}$ of \gckc on a space $X\subseteq \mathbb{R}$ with $|X| = |V|$, $\gamma = |E|$ many colors, and $k = t$. Wlog, we can assume that $G=(V,E)$ is connected, which in particular implies $|E|\geq |X|-1$.

For algorithm $\mathcal{A}$ to be applicable we need to make sure that the number of colors $\gamma$ is in the correct regime with respect to the size of the ground set $|X|$. More precisely, we would need $\gamma \leq f(|X|)$. As this property may not hold for $\mathcal{I}$, we create an auxiliary \gckc instance $\overline{\mathcal{I}}$ obtained by inflating $\mathcal{I}$ through the addition of dummy vertices as discussed in the following.
Because $f(n) = \omega(\log n)$, there is a constant $n_0\in \mathbb{Z}_{>0}$ and a non-decreasing function $h:\mathbb{Z}_{\geq 0}\to \mathbb{Z}_{> 0}$ with
\begin{enumerate}[topsep=0.3em]
\item $\lim_{n\to\infty} h(n) = \infty$, and
\item $f(n) \geq h(n) \cdot \log n \quad\forall n\in \mathbb{Z}_{\geq n_0}$.
\end{enumerate}
Without loss of generality, we assume that $|X|\geq n_0$, as otherwise, the \gckc instance $\mathcal{I}$ has constant size and can therefore be solved in constant time.
We add
\begin{equation*}
N \coloneqq \max\left\{0,2^{\left\lceil\frac{|E|}{h(|X|)}\right\rceil} - |X|\right\}
\end{equation*}
 new colorless dummy points to the \gckc instance $\mathcal{I}$ to obtain a new blown-up \gckc instance $\overline{\mathcal{I}}$. These dummy points can be added to an arbitrary location as they are only used to blow up the instance size and, being colorless, do not have any further impact. Hence, the new $\gckc$ instance $\overline{\mathcal{I}}$ has size
\begin{equation}\label{eq:sizeNewInstance}
|\overline{X}| = \max\left\{ |X|, 2^{\left\lceil \frac{|E|}{h(|X|)} \right\rceil} \right\}\enspace
\end{equation}
with $\gamma$ many colors and is equivalent to the original instance $\mathcal{I}$. Notice that
\begin{equation*}
f(|\overline{X}|) \geq h(|\overline{X}|)\log |\overline{X}| \geq |E| \cdot \frac{h(|\overline{X}|)}{h(|X|)} \geq |E| = \gamma\enspace,
\end{equation*}
where the above inequalities follow by the properties of the function $h$, including that $h$ is non-decreasing, and~\eqref{eq:sizeNewInstance}. Hence, algorithm $\mathcal{A}$ is applicable to $\overline{\mathcal{I}}$ and, because $\overline{\mathcal{I}}$ and $\mathcal{I}$ are equivalent instances, $\mathcal{A}$ solves the original $\VCthree(G,t)$ instance we started with. It remains to show that its running time violates the Exponential Time Hypothesis.

The running time to solve $\overline{\mathcal{I}}$ through $\mathcal{A}$ is upper bounded by
\begin{equation*}
c_1 |\overline{X}|^{c_2} = c_1 \max \left\{
|X|^{c_2},
2^{ c_2 \left\lceil\frac{|E|}{h(|X|)} \right\rceil}
\right\} = 2^{o(|E|)}\enspace,
\end{equation*}
where the last equality uses $h(n) = \omega(1)$ and the fact that $|E| \geq |V|-1 = |X|-1$, which holds because $G$ is connected. Hence, the total running time to solve the original $\VCthree(G,t)$ instance, including the above-mentioned reductions (for example to decompose the $\VCthree(G,t)$ instance if $G$ is not connected), is $2^{o(|E|)}\poly(|V|)$.
We conclude by showing that this indeed contradicts the Exponential Time Hypothesis. Observe that for any instance of $\VCthree(G,t)$, any feasible vertex cover has size at least $|E|/3$, and so, if the given $t$ is less than $|E|/3$, then we can easily decide that the instance has no vertex cover of size $t$. Thus, any non-trivial instance of $\VCthree(G,t)$ satisfies $t \geq |E|/3$. This means that we can obtain an algorithm that solves $\VCthree(G,t)$ in time $2^{o(t)}\poly(|V|)$, which contradicts part~\ref{thm: hardness-i} of Theorem~\ref{thm:VC-hardness}.

Finally, we highlight that the function $h(n)$ does not need to be known or computed explicitly to perform the reduction. By our choice of $N$, the blown-up \gckc instance $\overline{\mathcal{I}}$ has a size $|\overline{X}|$ that is either $|X|=|V|$ or a power of two between $|X|$ and $2^{|E|}$. Hence, one could simply run $\mathcal{A}$ in parallel for each of these linearly many options of the size of the blown-up instance and terminate as soon as the first one of these parallel computations terminates.    
\end{proof}

\subsection{Hardness for bi-criteria algorithms}\label{subsec:bicriteria}

In this section, we extend the hardness result stated in Theorem~\ref{thm:mainHardness} to bi-criteria algorithms. We start by observing that the reduction described in Lemma~\ref{lemma:reduction} can be carried out even if we start with a general Set Cover instance. We now define the Set Cover problem.
\begin{definition}
Let $U = \{u_1, \ldots, u_n\}$ be a universe of $n$ elements, and let $\mathcal{S} = \{S_1, \ldots, S_m\}$ be a family of subsets of $U$, i.e., $S_i \subseteq U$ for every $i \in [m]$. The Set Cover problem asks to compute the smallest set $J \subseteq [m]$ such that $\bigcup_{j \in J} S_j = U$.
\end{definition}

Set Cover is a well-understood $\NP$-hard problem. We are interested in the hardness of Set Cover, which, after a long series of works, was settled by Dinur and Steurer~\cite{DBLP:conf/stoc/DinurS14}; we state their result as Theorem~\ref{thm:set-cover-hardness}. We note that since we are not interested in optimizing the constant that appears in the main theorem of this section, any known $\Omega(\log n)$-hardness result for Set Cover suffices to get Theorem~\ref{thm:bicriteria-hardness}, proved below.
\begin{theorem}[\cite{DBLP:conf/stoc/DinurS14}]\label{thm:set-cover-hardness}
For every $\varepsilon > 0$, it is $\NP$-hard to approximate Set Cover for instances with universe size $n$ and $m \leq \poly(n)$ sets to within a factor of $(1 - \varepsilon) \ln n$.
\end{theorem}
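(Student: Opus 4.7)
The plan is to derive the $(1-\varepsilon)\ln n$-hardness by reduction from a suitably strong gap version of Label Cover, following the Feige--Lund--Yannakakis--Dinur--Steurer framework. First, I would invoke the PCP theorem together with Raz's parallel repetition theorem to obtain a two-prover one-round Label Cover instance with bipartite constraint graph $G=(U\cup V, E)$, alphabets $\Sigma_U, \Sigma_V$, and projection constraints $\pi_e : \Sigma_U \to \Sigma_V$ for every $e\in E$, such that deciding between the YES case (some labeling satisfies all edge constraints) and the NO case (no labeling satisfies even a $\delta$-fraction of edge constraints) is $\NP$-hard. The parameter $\delta$ can be made arbitrarily small at the cost of enlarging $|\Sigma_U|,|\Sigma_V|$ polynomially.

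The central gadget is a \emph{partition system}: a ground set $B$ of size $M$ together with $L$ partitions $P_1,\dots,P_L$ of $B$, each into $k$ blocks, such that any union of $r$ blocks drawn from pairwise distinct partitions covers at most a $(1-1/k)^r$ fraction of $B$. Standard probabilistic constructions yield such systems with $M=\poly(L)\cdot 2^{O(k)}$. I would install a fresh copy of this gadget at every vertex $v\in V$, using $L=|\Sigma_V|$ partitions---one per possible label of $v$---and take the universe of the Set Cover instance to be the disjoint union of all copies. For every $u\in U$ and every label $\sigma\in\Sigma_U$ I introduce a set $S_{u,\sigma}$ that, inside each $v$-gadget with $e=(u,v)\in E$, picks a fixed block within the partition indexed by $\pi_e(\sigma)$.

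In the YES case, a satisfying labeling $\varphi:U\to\Sigma_U$ yields the cover $\{S_{u,\varphi(u)}:u\in U\}$ of size $|U|$ covering every gadget fully, so the optimum is at most $|U|$. In the NO case, I would argue by contradiction: a cover of size at most $(1-\varepsilon)|U|\ln M$ induces, by averaging, covering multiplicity $(1-\varepsilon)\ln M$ on a typical $v$-gadget; by the partition-system soundness, the blocks used to cover such a gadget must concentrate within a single partition, which decodes a unique candidate label for $v$, and the projection structure of the $S_{u,\sigma}$ pulls this back to a labeling of $U$ satisfying a $\delta$-fraction of edges---a contradiction. Choosing $k=\Theta(\ln(1/\delta))$ and $M=\poly(|U|,|V|)$ so that $\ln M = \Theta(\ln n)$ for the final universe size $n$ delivers the stated ratio.

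The main obstacle is tightness of the soundness analysis: naive bounds lose constant factors between $\ln M$, $\ln(|U|k)$, and $\ln(1/\delta)$, so one must carefully balance $k$ against $\delta$ and deploy a concentration argument ensuring that the decoded labeling hits the $\delta$-threshold with positive probability. Feige's original proof achieved only a quasi-$\NP$-hardness result under $\NP\not\subseteq \mathrm{DTIME}(n^{O(\log\log n)})$ because the necessary $\delta$ was sub-constant and parallel repetition incurred a quasi-polynomial blow-up. The Dinur--Steurer contribution, which I would adopt in the final step, replaces partition systems with an agreement-testing gadget built on product Label Cover (together with Moshkovitz--Raz style near-linear PCPs) that supports polynomially small $\delta$ without quasi-polynomial blow-up, so the multiplicative constant in front of $\ln n$ is preserved and full $\NP$-hardness is obtained.
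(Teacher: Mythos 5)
This statement is not proved in the paper at all: it is imported verbatim from Dinur and Steurer \cite{DBLP:conf/stoc/DinurS14} as a black-box hardness result, used only as the starting point for Theorem~\ref{thm:bicriteria-hardness}, so there is no in-paper proof to compare your attempt against. Judged on its own terms, your sketch accurately reproduces the classical Feige framework (projection Label Cover, per-vertex partition systems, decoding a label for $v$ from a too-efficient cover of its gadget, pulling it back along the projections), but it does not establish the theorem as stated, because the only content of Theorem~\ref{thm:set-cover-hardness} beyond Feige's older quasi-$\NP$-hardness is precisely the step you ``adopt'' from Dinur--Steurer at the end. Deferring the ability to run the reduction with the required soundness/size trade-off under plain $\NP$-hardness to the very reference being proved makes the argument circular as a proof of the cited statement; as a literature summary it is fine, but then the honest move is the one the paper makes, namely to cite the result (the paper even notes that any $\Omega(\log n)$ $\NP$-hardness for Set Cover would suffice for Theorem~\ref{thm:bicriteria-hardness}).

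There is also a concrete parameter inaccuracy in your first step: you claim Raz's parallel repetition drives the soundness $\delta$ ``arbitrarily small at the cost of enlarging $|\Sigma_U|,|\Sigma_V|$ polynomially.'' That holds only for each fixed constant $\delta$; achieving soundness $\delta$ via repetition blows the instance and alphabet up by a factor of order $n^{\Theta(\log(1/\delta))}$, and the tight $(1-\varepsilon)\ln n$ ratio forces a soundness level (relative to the partition-system parameters and the averaging loss you mention) that is sub-constant in the relevant regime. This is exactly why Feige obtained the bound only under $\NP \not\subseteq \mathrm{DTIME}\bigl(n^{O(\log\log n)}\bigr)$, and why the Dinur--Steurer machinery (or Moshkovitz's later work) is needed; so the reduction as you parameterize it would not deliver the claimed $\NP$-hardness without the deferred ingredient.
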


Mimicking the proof of Lemma~\ref{lemma:reduction}, we can prove the following lemma.
\begin{lemma}\label{lemma:reduction-set-cover}
Given a Set Cover instance $(U,\mathcal{S})$ and an integer $t \in [|\mathcal{S}|]$, there exists a polynomial-time algorithm that constructs a \gckc instance on the real line with the following properties:
\begin{enumerate}
    \item $|X| = |\mathcal{S}|$, $\g = |U|$, and $k = t$,
    \item it has a feasible solution of radius $0$ if and only if there exists a set cover in the given instance of size at most $t$.
\end{enumerate}
\end{lemma}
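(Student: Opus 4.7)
The plan is to follow the blueprint of Lemma~\ref{lemma:reduction} essentially verbatim, replacing the role of vertices by sets and the role of edges by elements of the universe. Given the Set Cover instance $(U,\mathcal{S})$ with $U = \{u_1,\ldots,u_n\}$ and $\mathcal{S} = \{S_1,\ldots,S_m\}$, together with the integer $t$, I will construct the \gckc instance by placing one point of $X$ on the real line for each set in $\mathcal{S}$. Concretely, I would set $X = \{1, 2, \ldots, m\} \subseteq \mathbb{R}$ so that $|X| = |\mathcal{S}|$, take $k = t$, and introduce one color per element of $U$, namely $X_\ell = \{i \in [m] : u_\ell \in S_i\}$ for every $\ell \in [n]$, with covering requirement $m_\ell = 1$. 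This immediately gives $\gamma = |U|$ and $|X|=|\mathcal{S}|$, $k=t$ as required in the first bullet.

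For the equivalence in the second bullet, I would argue both directions with a radius of $0$, which forces any ball $B(c,0)$ to cover only the single point $c$ itself. If $J \subseteq [m]$ is a set cover of size at most $t$, then choosing $C = J \subseteq X$ yields $|C| \leq t = k$, and for every color $\ell \in [n]$ the element $u_\ell$ must lie in some $S_j$ with $j \in J$, so by construction $j \in X_\ell \cap C$, giving $|B(C,0) \cap X_\ell| \geq 1 = m_\ell$. Conversely, if $C \subseteq X$ is a \gckc solution of radius $0$ with $|C| \leq k = t$, then interpreting $C$ as a collection of indices, the covering requirement $|C \cap X_\ell| \geq 1$ for every $\ell$ translates back to the statement that every element $u_\ell$ belongs to some $S_i$ with $i \in C$, i.e., $C$ is a set cover of $U$ of size at most $t$.

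I do not anticipate any real obstacle: the reduction is a direct generalization of the one in Lemma~\ref{lemma:reduction} (which is the special case where each $|S_i| = 2$, so colors correspond to edges), and the only mild subtlety is to emphasize that the radius $0$ collapses the covering constraints to pure incidence constraints between the chosen points and the colors. The construction is clearly computable in polynomial time, concluding the proof.
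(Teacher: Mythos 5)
Your construction is identical to the paper's: both place one point per set $S_i$, introduce one color per universe element with $X_\ell = \{i : u_\ell \in S_i\}$ and $m_\ell = 1$, and observe that radius $0$ solutions of size at most $k=t$ correspond exactly to set covers of size at most $t$. The only difference is that you spell out the equivalence in more detail, whereas the paper leaves it as ``easy to observe.''
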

\begin{proof}
We construct a \gckc instance with $|X|=|\mathcal{S}|$, $\g = |U|$, and $k = t$ as follows. Let $U = \{u_1,\ldots, u_{|U|}\}$ and $\mathcal{S} = \{S_1, \ldots, S_{|\mathcal{S}|}\}$. We set $X = \{1, \ldots, |\mathcal{S}|\} \subseteq \mathbb{R}$. Each element $u_\ell \in U$ corresponds to a distinct color $X_\ell = \{i \in [|\mathcal{S}|]: u_\ell \in S_i\}$. Note that the number of colors of a point $i \in X$ is equal to $|S_i|$. We also set the covering requirement for each color to be $m_\ell = 1$. It is easy to observe that the given Set Cover instance is a YES instance if and only if there is a solution to our $\gckc$ instance of radius $0$. This implies the result.    
\end{proof}

Putting together Theorem~\ref{thm:set-cover-hardness} and Lemma~\ref{lemma:reduction-set-cover}, we can now prove Theorem~\ref{thm:bicriteria-hardness}
\begin{proof}[Proof of Theorem~\ref{thm:bicriteria-hardness}]
Let $(U, \mathcal{S})$ be a Set Cover instance with optimal value $k^*$ and $|\mathcal{S}| \leq \poly(|U|)$. Suppose that there exists a $\left(c \log|X|, \beta\right)$ bi-criteria approximation algorithm for \gckc, for a constant $c > 0$ that will be determined later. For every $k \in \{1, \ldots, \min\{|U|, |\mathcal{S}|\}\}$, we use the reduction of Lemma~\ref{lemma:reduction-set-cover} (with $t = k$) to get a \gckc instance with $|X| = |\mathcal{S}|$, and $\gamma = |U|$. For $k = k^*$, the resulting \gckc instance, by Lemma~\ref{lemma:reduction-set-cover}, has a feasible solution of radius $0$. Thus, for this instance our algorithm will return a solution of radius at most $\beta \cdot 0 = 0$ with at most $k^* \cdot c \log|X|$ centers. It is easy to see that such a solution corresponds to a set cover solution for the given Set Cover instance of size at most $k^* \cdot c \log|\mathcal{S}|$. Since $|\mathcal{S}| \leq \poly(|U|)$, this means that the returned set cover has size at most $k^* \cdot c' \log |U|$, for some constant $c' > 0$ that depends on $c$ and the hidden universal constants in the $|\mathcal{S}| \leq \poly(|U|)$ assumption. Thus, by considering all constructed \gckc instances for which a solution of radius $0$ was returned and picking the solution of radius $0$ returned for the smallest $k$, we obtain a set cover of size at most  $k^* \cdot c' \log |U|$. By setting the constant $c$ appropriately (it is easy to see that this can always be done for sufficiently small $c$), this now contradicts Theorem~\ref{thm:set-cover-hardness}. We conclude that it is $\NP$-hard to decide whether a \gckc instance has a solution of radius $0$, even if we allow solutions that open up to $c \log|X| \cdot k$ centers.
\end{proof}

\section{Conclusion}

In this work, we presented a technique for obtaining true constant-factor approximation algorithms for $k$-center problems with multiple covering constraints on the points to be covered. This leads to a polynomial-time $4$-approximation algorithm for $\gamma$-Colorful $k$-Center, where $\gamma$, the number of colors, is assumed to be constant, as well as a polynomial-time $4$-approximation algorithm for the more general Fair $\gamma$-Colorful $k$-Center problem. 

We note here that our results extend to the supplier setting, where there are distinct sets of facilities and clients, and one is allowed to open $k$ facilities in order to cover clients. For such settings, we obtain a polynomial-time $5$-approximation algorithm for the Fair $\gamma$-Colorful $k$-Supplier problem. The extension of our arguments to this setting is done by using a standard technique: we first find clients $C$ that constitute a $4$-approximate solution to the corresponding Center problem and then pick a facility $f_c\in B(c,r)$ for each $c\in C$. Using the notation introduced in the description of Algorithm~\ref{alg:partition}, we note that terminating Algorithm~\ref{alg:partition} once $\max_{u\in U} x(u) = 0$ does not affect the remaining steps in our approximation algorithms. Hence we may assume that $x(s)>0$ for all $s\in S$, which guarantees the existence of a facility in $B(s,r)$. We also clarify that the ``guessing a few centers" part of our algorithm performed in Lemma~\ref{lemma:DP_lottery} can be applied directly to facilities with no issues arising.

On the negative side, we show that Colorful $k$-Center is inapproximable when the number of colors is assumed to be part of the input.

There are still some open questions remaining; we address two of them, which we find most natural and interesting:
\begin{enumerate}
    \item The currently known hardness of $\gamma$-Colorful $k$-Center is $2-\varepsilon$, inherited from the standard $k$-Center problem, while (for constant $\gamma$) we give a polynomial-time $4$-approximation, and, as already mentioned, in an independent work, Jia, Sheth, and Svensson~\cite{DBLP:conf/ipco/JiaSS20} give a polynomial-time $3$-approximation. It would be interesting to close this gap.
    \item $\gamma$-Colorful $k$-Center naturally generalizes to the knapsack and matroid versions of it, where the set of centers that are opened must satisfy a knapsack or a matroid constraint. Currently, our technique does not easily generalize to such settings, so new ideas might be needed to handle these problems.
\end{enumerate}

\appendix

\section{Technical theorems}\label{appendix:missing-proofs}

\begin{theorem}[\cite{DBLP:conf/esa/Bandyapadhyay0P19}]\label{thm:general-sparse}
Let $(X,d)$ be a finite metric space, and suppose that the following polytope
\begin{equation*}
\mathcal{T} = \left\{ (x,y)\in [0,1]^X \times [0,1]^X \: \middle| \:
\renewcommand\arraystretch{1.2}
\begin{array}{>{\displaystyle}rcl@{\quad}l}
    \sum_{v\in X}y(v)                &\leq &k & \\
    \sum_{v\in B(u,r)} y(v)   &\geq&x(u) & \forall u\in X \\ 
    \sum_{u\in X} a_\ell(u) x(u) &\geq &b_{\ell} &\forall \ell \in [t]  
\end{array}
\right\}
\end{equation*}
is not empty, where $k \in \{1, \ldots, |X|\}$, $t \in \mathbb{Z}_{\geq 0}$, $a_\ell \in \mathbb{R}_{\geq 0}^X$ and $b_{\ell} \in \mathbb{R}_{\geq 0}$ for every $\ell\in[t]$, and $r \geq 0$. Let $(x,y) \in \mathcal{T}$, and let $(S,\mathcal{D})$ be a partition obtained by running Algorithm~\ref{alg:partition} with input $(x,y)$. Then, if $y(B(S, r)) \leq k - t + 1$, we can find in polynomial time a set $C \subseteq X$ with $|C| \leq k$ satisfying $a_\ell(B(C,4r)) \geq b_\ell$ for all $\ell \in [t]$.
\end{theorem}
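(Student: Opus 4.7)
The plan is to reduce the problem to a simplified LP restricted to opening centers in $S$ (which automatically cover with radius $4r$ by virtue of the partition), and then apply a sparsity-based rounding argument that carefully exploits the specific slack guaranteed by $y(B(S,r))\le k-t+1$.

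First I would introduce the restricted LP over $z\in[0,1]^S$ given by minimizing $\sum_{i=1}^{|S|} z(s_i)$ subject to $\sum_{i=1}^{|S|} z(s_i)\,a_\ell(D_i)\ge b_\ell$ for all $\ell\in[t]$. Any integral $z$ feasible for this LP corresponds to a set $C\subseteq S$ that satisfies all $t$ covering constraints with radius $4r$, because $D_i\subseteq B(s_i,4r)$ and the $D_i$ partition $X$. The next step is to exhibit a good fractional solution: set $\bar z(s_i):=\min\{y(B(s_i,r)),\,1\}$. Property~(iii) of a good partition gives $x(u)\le y(B(s_i,r))$ for all $u\in D_i$, and $x(u)\le 1$ in general, so $x(u)\le \bar z(s_i)$. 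Summing the original covering constraint of $\mathcal{T}$ cluster by cluster yields
\[
b_\ell\le \sum_{u\in X}a_\ell(u)x(u)=\sum_{i}\sum_{u\in D_i}a_\ell(u)x(u)\le \sum_i \bar z(s_i)\,a_\ell(D_i),
\]
so $\bar z$ is feasible. Moreover, $d(s_i,s_j)>4r$ makes the balls $B(s_i,r)$ pairwise disjoint, hence $\bar z(S)\le \sum_i y(B(s_i,r))=y(B(S,r))\le k-t+1$.

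Next I would compute in polynomial time an optimal vertex $z^*$ of the restricted LP; by optimality $z^*(S)\le \bar z(S)\le k-t+1$. Standard LP sparsity applies: letting $F:=\{i:z^*(s_i)\in(0,1)\}$ and $f:=|F|$, each fractional coordinate must participate in a distinct linearly independent tight covering constraint, so $f\le t$. Now round up, defining $z^{\mathrm{int}}(s_i)=1$ if $z^*(s_i)>0$ and $0$ otherwise. Since $z^{\mathrm{int}}\ge z^*$ coordinatewise, the covering constraints remain satisfied, and
\[
z^{\mathrm{int}}(S)=z^*(S)+\sum_{i\in F}\bigl(1-z^*(s_i)\bigr)<z^*(S)+f\le (k-t+1)+t=k+1,
\]
where the strict inequality uses that $z^*(s_i)>0$ for $i\in F$ (if $F=\emptyset$, $z^*$ is already integral). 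Because $z^{\mathrm{int}}(S)\in\mathbb{Z}$, we conclude $z^{\mathrm{int}}(S)\le k$. Output $C:=\{s_i:z^{\mathrm{int}}(s_i)=1\}$; then $|C|\le k$ and for each $\ell$,
\[
a_\ell(B(C,4r))\ge \sum_{s_i\in C}a_\ell(D_i)=\sum_i z^{\mathrm{int}}(s_i)\,a_\ell(D_i)\ge \sum_i z^*(s_i)\,a_\ell(D_i)\ge b_\ell.
\]

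The delicate point, and the main obstacle, is the final count in the case $f=t$. Generic sparsity only yields $f\le t$, so naïve rounding would budget for $k-t+1$ plus $t$, i.e.\ $k+1$ centers. The key is that every fractional value is strictly positive, which forces the rounding increase to be \emph{strictly} less than $t$; combined with the integrality of the rounded total, this exactly saves the one unit needed to come in at $k$. This is precisely why the hypothesis is stated as $y(B(S,r))\le k-t+1$ and not weaker.
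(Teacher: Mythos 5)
Your proposal is correct and follows essentially the same route as the paper's proof: the same restricted LP over the cluster centers, the same fractional solution $\min\{1,y(B(s_i,r))\}$ certifying optimal value at most $y(B(S,r))\le k-t+1$, and the same extreme-point sparsity bound of at most $t$ fractional coordinates. Your final counting (round up, use strict inequality plus integrality to get $|C|\le k$) is a minor, equally valid variant of the paper's case analysis, and it also subsumes the edge cases ($q\le k$, $t=k+1$) that the paper treats separately.
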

\begin{proof}
Let $S = \{s_1, \ldots, s_q\}$ and $\mathcal{D} = \{D_1, \ldots, D_q\}$ be the partition obtained by running Algorithm~\ref{alg:partition} with input $(x,y)$. It is easy to see that $(S,\mathcal{D})$ satisfies all three properties of an $(x,y)$-good partition, and so, by slightly abusing terminology, we will call it an $(x,y)$-good partition.\footnote{Note that the only reason why this is a slight abuse of terminology is because we defined $(x,y)$-good partitions only for points in $\mathcal{P}$. Moreover, contrary to $\mathcal{T}$, the decription of the polytope $\mathcal{P}$ contains specific constraints for the covering requirements of the colors. However, these constraints did not play any role in showing that Algorithm~\ref{alg:partition} returns an $(x,y)$-good partition (see proof of Lemma~\ref{thm:partition-algorithm}).}
We now assume that $q \geq k + 1$, since otherwise, the set of centers $S$ is already a feasible solution, as $B(S,4r) = X$. We claim that the simplified LP given below is feasible and has optimal value  at most $y(B(S,r))$. 
\begin{equation}\label{eq:sparseLP}
\renewcommand\arraystretch{1.5}
\begin{array}{r>{\displaystyle}rll@{\quad}l}
    \min &  \sum\limits_{i=1}^q z_i \hspace*{34pt} &  &\\
    \textrm{s.t.} & \sum\limits_{i=1}^q a_\ell(D_i) \cdot z_i  &\geq&  b_\ell  &\forall \ell \in [t] \\
                  &z_i &\in& [0,1]  &\forall i\in[q] \enspace.
\end{array}
\end{equation}
This is indeed the case because we can construct a feasible point to the above LP with objective value at most $y(B(S,r))$ as follows. Let $z_i = \min\{1, y(B(s_i,r))\}$ for all $i\in [q]$. Because $(\mathcal{D}, S)$ is a $(x,y)$-good partition, property~\ref{item:goodClusterColors} of Definition~\ref{def:good-partition} implies that 
\begin{equation*}
\sum_{i\in [q]} a_\ell(D_i) z_i  \geq \sum_{i\in [q]} \sum_{u \in D_i}a_\ell(u) x(u) =\sum_{u\in X} a_\ell(u) x(u)\geq  b_\ell \qquad \forall \ell\in [t]\enspace,
\end{equation*}
(here we also use the fact that $x(u) \leq 1$ for all $u \in X$, as $(x,y) \in \mathcal{T}$), i.e., $z$ is a feasible solution of the above LP, and its objective value is $\sum_{i \in [q]} z_i \leq y(B(S, r))$.

Suppose now that the hypothesis holds, i.e., $y(B(S,r)) \leq k - t + 1$. In particular, this means that $t \leq k + 1 \leq q$. Note that if $t = k + 1$, then $y(B(S,r)) = 0$, which, by the greediness of Algorithm~\ref{alg:partition}, further implies that $b_\ell = 0$ for every $\ell \in [t]$. Such a case is trivial, as we can simply set $C \coloneqq \emptyset$. Thus, from now on, we assume that $t \leq k < q$. By the above discussion, the optimal value of the above simplified LP is at most $k - t + 1$. We consider an optimal extreme point solution $z^*$ of LP~\eqref{eq:sparseLP}. A standard sparsity argument implies that $z^*$ has at most $t$ fractional variables. Indeed, $z^*$ is defined by $q$ linearly independent and tight constraints of~\eqref{eq:sparseLP}, among which at most $t$ many are not of type $z_i\geq 0$ or $z_i \leq 1$. Hence, this implies that there are at least $q-t$ $z^*$-tight constraints of~\eqref{eq:sparseLP} of type $z_i^* =0$ or $z_i^*=1$. This in turn implies that $z^*$ has at most $t$ fractional components.

Furthermore, the number of strictly positive components of $z^*$ is at most $k$. To see this, note that if $k-t+1$ components of $z^*$ are equal to $1$, all other entries must be $0$ because $z^*$ is an optimal solution to~\eqref{eq:sparseLP}, which has objective value no more than $k-t+1$. Otherwise, there are at most $k-t$ variables that are equal to $1$ and, together with at most $t$ fractional variables, there are at most $k$ strictly positive entries. Therefore, the set of centers $C=\{s_i \in X \ | \ z^*_i>0\} \subseteq S$ has size at most $k$ and satisfies $a_\ell(B(C,4r)) \geq b_\ell$ for all $\ell \in [t]$, because $\bigcup_{c \in C} B(c,4r) \supseteq \bigcup_{i: z^*_i>0} D_i$, as $D_i \subseteq B(s_i, 4r)$ for all $i\in [q]$.   
\end{proof}

For completeness, we now discuss how the dynamic programming problems appearing in our approaches can be solved in the claimed running time.
\begin{theorem}\label{thm:dp}
Consider the following binary program:
\begin{equation*}
\begingroup
\renewcommand*{\arraystretch}{1.5}
\begin{array}{r>{\displaystyle}rll@{\quad}l}
\max& \sum_{i = 1}^q w(i) \cdot z(i)&\\
 \mathrm{s.t.}& \sum_{i = 1}^q a_\ell(i)\cdot z(i) &\geq& m_\ell  &\forall \ell\in [\gamma]\\
&\sum_{i = 1}^q z(i)  &\leq&   \kappa \\
&z &\in& \{0,1\}^{q} \enspace,
\end{array}
\endgroup
\end{equation*}
where $\gamma\in \mathbb{Z}_{\geq 1}$, $w \in \mathbb{R}_{\geq 0}^q   $, $a_\ell \in \{0, \ldots, M\}^q$ and $m_\ell \in \{0, \ldots, M\}$ for all $\ell \in [\gamma]$, where $M$ is some positive integer number, and $\kappa \in [q]$. Then, the above program can be solved in time $O(\gamma q^2 M^\gamma)$.
\end{theorem}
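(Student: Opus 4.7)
My plan is to solve the binary program by a straightforward dynamic programming routine that fills a table indexed by a prefix of the items, the number of items already selected, and a vector of ``coverage-so-far'' values for the $\gamma$ covering constraints, each of which can be safely truncated at $m_\ell \leq M$.

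Concretely, I would define, for $i \in \{0,1,\ldots,q\}$, $c \in \{0,1,\ldots,\kappa\}$, and $(c_1,\ldots,c_\gamma) \in \{0,1,\ldots,M\}^\gamma$, the entry
\[
T[i,c,c_1,\ldots,c_\gamma] \;=\; \max\Bigl\{\, \textstyle\sum_{j\in I} w(j) \;\Big|\; I\subseteq [i],\ |I|=c,\ \textstyle\sum_{j\in I} a_\ell(j) \geq c_\ell\ \forall\, \ell\in[\gamma] \,\Bigr\},
\]
with the convention that $T[i,c,c_1,\ldots,c_\gamma] = -\infty$ if no such $I$ exists. The base case is $T[0,0,0,\ldots,0] = 0$ and $T[0,c,c_1,\ldots,c_\gamma] = -\infty$ otherwise. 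The recurrence, obtained by deciding whether to include item $i$, reads
\[
T[i,c,c_1,\ldots,c_\gamma] \;=\; \max\!\left\{
\begin{array}{l}
T[i-1,c,c_1,\ldots,c_\gamma],\\[2pt]
w(i) + T\bigl[i-1,\,c-1,\,c_1',\,\ldots,\,c_\gamma'\bigr]
\end{array}
\right\},
\]
where $c_\ell' \coloneqq \max\{0,\,c_\ell - a_\ell(i)\}$ and the second branch is only taken when $c\geq 1$. Truncating at $0$ on each coordinate is correct because once a covering constraint is satisfied, any further surplus is irrelevant, and truncating at $M$ is correct because $m_\ell \leq M$.

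The optimal value of the binary program is then $\max_{c\in\{0,\ldots,\kappa\}} T[q,c,m_1,\ldots,m_\gamma]$, and an optimal $z$ can be reconstructed by the usual backtracking through the table. Counting the table size, there are $O(q \cdot \kappa \cdot (M+1)^\gamma) = O(q^2 M^\gamma)$ entries since $\kappa \leq q$, and computing each entry from the recurrence requires $O(\gamma)$ work to form the truncated coverage vector $(c_1',\ldots,c_\gamma')$ and compare two values. This yields the claimed $O(\gamma q^2 M^\gamma)$ running time.

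There is essentially no hard step here; the only thing to be careful about is the truncation argument that justifies restricting each coverage coordinate to $\{0,1,\ldots,M\}$, and the fact that $a_\ell$ and $m_\ell$ take nonnegative integer values (as assumed in the statement) so the truncated updates do land on lattice points of the table. Once these bookkeeping points are in place, the result is immediate.
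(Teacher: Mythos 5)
Your proof is correct and matches the paper's argument essentially line for line: the same DP table indexed by prefix, number of selected items, and a truncated coverage vector of size at most $M$ per coordinate, with the same include/exclude recurrence and the same $O(q\kappa M^\gamma)$ table-size and $O(\gamma)$ per-entry analysis. The only cosmetic difference is that you fix $|I|=c$ exactly and maximize over $c$ at the end, while the paper builds the $\leq j$ condition directly into the table; this changes nothing.
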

\begin{proof}
The above binary program can be solved using standard dynamic programming techniques. More precisely, we define the following DP table. For every $i \in \{0, \ldots, q\}$, $M_\ell \in \{0, \ldots m_\ell\}$ for every $\ell \in [\gamma]$, and $j \in \{0, \ldots, \kappa\}$, let $A[i, M_1, \ldots, M_\gamma, j]$ be the maximum objective value of any vector $z\in \{0,1\}^q$ that satisfies
\begin{enumerate}
\item $\{t \in [q]: \; z(t) = 1\} \subseteq [i]$,
\item $\sum_{t = 1}^q z(t) \leq j$, and
\item $\sum_{t = 1}^q a_\ell(t)\cdot z(t) \geq M_\ell$  for every $\ell \in [\gamma]$.
\end{enumerate}
Initialization is easy to define. For all non-trivial tuples $[i, M_1, \ldots, M_\gamma, j]$, by setting $b_\ell \coloneqq \max\{M_\ell - a_\ell(i), 0\}$ for every $\ell \in [\gamma]$, we have
\begin{align*}
    A[i, M_1, \ldots, M_\gamma, j] = \max\{&w(i) + A[i-1, b_1, \ldots, b_\gamma, j-1], A[i-1, M_1, \ldots, M_\gamma, j] \}\enspace.
\end{align*}
There are $O(q \kappa M^\gamma)$ table entries in total, and each entry can be computed in time $O(\gamma)$. Thus, the DP can be solved in time $O(\gamma q^2  M^\gamma)$.   
\end{proof}
We remark that the $O(\gamma)$ update time per table entry in the above proof can be reduced to $O(1)$ amortized update time per table entry through a more careful analysis. However, the resulting slight reduction in running time from $O(\gamma q^2 M^{\gamma})$ to $O(q^2 M^{\gamma})$ is irrelevant for our purposes.

\section{A limiting example for the framework of Chakrabarty and Negahbani~\cite{DBLP:journals/talg/ChakrabartyN19}\label{appendix:bad-example-deeparnab}}

A natural way to extend the approach of~\cite{DBLP:journals/talg/ChakrabartyN19} is the following procedure. Given a point $(x,y)\in \mathbb{R}^X\times \mathbb{R}^X$, we first run Algorithm~\ref{alg:partition} (with balls of radius $2$ at each step) to get a partition of $X$, and then we use dynamic programming to decide whether it is possible to select at most $k$ clusters of this partition so that the covering requirements for all colors are satisfied. Such a selection, if it exists, gives a $2$-approximation. If there is no such selection, we want to return a hyperplane separating $(x,y)$ from $\ip$, as in~\cite{DBLP:journals/talg/ChakrabartyN19}.

However, there is an instance and a point $(x,y)$, given below, such that neither the partition will lead to a solution nor is it possible to separate $(x,y)$ from $\ip$. Thus any such procedure needs to deal with this limitation.

In Figure~\ref{fig:deeparnab_counterexample}, we present an instance of $\g$-Colorful $k$-Center with $\g=k=2$ in the one-dimensional Euclidean space; hence $X\subseteq \mathbb{R}$. There are two colors, red and blue; the red points are represented as red circles and the blue points as blue squares. The color covering requirements are $m_1=m_2=3$. It is easy to see that there are no integral solutions of radius $0$, hence any solution with radius $1$ is optimal. We consider two different optimal solutions:
\begin{itemize}
    \item  $C_1 = \{1, M + 1\}$ with corresponding clustering $\mathcal{C}_1 = \left\{ \{1,2\}, \{M+1, M+2 \} \right\}$,
    \item  $C_2 = \{4, M+4\}$ with corresponding clustering $\mathcal{C}_2 = \left\{ \{3,4\}, \{M+3, M+4\} \right\}$.
\end{itemize}
We clarify that in the above, we slightly abuse notation; if there are multiple points in a location, we only pick one of them as a center, while in the corresponding clustering, all points in a covered location participate in the clustering. It is easy to verify that the above clusterings are indeed feasible solutions of radius 1, and thus, they are optimal solutions.

We now define the fractional solution $(x,y) \in \mathbb{R}^X \times \mathbb{R}^X$, where $x=\frac 1 2 \left( \chi^{\mathcal{C}_1}+\chi^{\mathcal{C}_2}\right)$ and $y=\frac 1 2 \left(\chi^{C_1} + \chi^{C_2}\right)$. Observe that we have $x(u)=\frac 1 2$ for all $u\in X$. 

In the above example, given the defined point $(x,y)$ as input, Algorithm~\ref{alg:partition} may return the indicated partitioning $\{D_1, D_2, D_3, D_4\}$. We stress here that there are ties, and in order to get this partitioning we resolve them adversarially. Note that there is no specified way to resolve such ties in~\cite{DBLP:journals/talg/ChakrabartyN19} and it seems highly unclear how to design a procedure that always break ties in a good way even if there is a good way to break them. Observe now that no combination of two of these resulting clusters satisfies the covering requirement, so the partitioning does not lead to a solution. However, we cannot possibly find an appropriate separating hyperplane because $(x,y)\in \ip$ by construction.

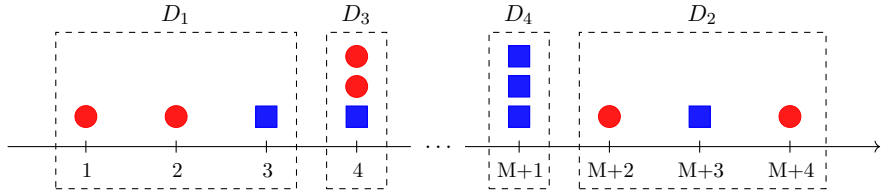
\begin{figure}[ht]
\begin{center}

    \scalebox{0.8}{
	\begin{tikzpicture}
	\pgfmathsetmacro{\myheight}{2.5}
	\pgfmathsetmacro{\myshift}{0.8}

	\draw[-, line width = 0.5pt, black] (0.5,1) -- (7.2,1) {{}};
	\draw[->, line width = 0.5pt, black] (8.1,1) -- (15,1) {{}};
	
	\node[right] at (7.3,0.99) {$\ldots$};
		
	\foreach \i in {1,...,4}
	{
		\draw[-,line width = 0.5pt, black] (0.3 + 1.5*\i,1.1) -- (.3 + 1.5*\i,0.9)  {{}};
		\node[right] at (0.1 + 1.5*\i,0.6) {\small  \i};
	}

	\foreach \i in {1,...,4}
	{
		\draw[-,line width = 0.5pt, black] (7.5+1.5*\i,1.1) -- (7.5+1.5*\i,0.9)  {{}};
		\node[right] at (7+1.5*\i,0.6) {\small  M+\i};
	}

	\node[circle,draw=red,fill=red!90,minimum size=10]  at (1.8,1.5) {{}};
	\node[circle,draw=red,fill=red!90,minimum size=10]  at (3.3,1.5) {{}};
	\node[draw=blue,fill=blue!90,minimum size=10]  at (4.8,1.5) {{}};
	\node[draw=blue,fill=blue!90,minimum size=10]  at (6.3,1.5) {{}};
	\node[circle,draw=red,fill=red!90,minimum size=10]  at (6.3,2) {{}};
	\node[circle,draw=red,fill=red!90,minimum size=10]  at (6.3,2.5) {{}};	
	
	\node[draw=blue,fill=blue!90,minimum size=10]  at (9,1.5) {{}};
	\node[draw=blue,fill=blue!90,minimum size=10]  at (9,2) {{}};
	\node[draw=blue,fill=blue!90,minimum size=10]  at (9,2.5) {{}};
	\node[circle,draw=red,fill=red!90,minimum size=10]  at (10.5,1.5) {{}};
	\node[draw=blue,fill=blue!90,minimum size=10]  at (12,1.5) {{}};
	\node[circle,draw=red,fill=red!90,minimum size=10]  at (13.5,1.5) {{}};

	\draw [black!100,line width = 0.5pt, dashed]  (1.3,0.3) -- (5.3,0.3) -- (5.3,2.9) --  node [text width=0.5cm, black, midway, above ]{$D_1$} (1.3,2.9) -- cycle;
	
	\draw [black!100,line width = 0.5pt, dashed]  (5.8,0.3) -- (6.8,0.3) -- (6.8,2.9) -- node [text width=0.5cm, black, midway, above ]{$D_3$} (5.8,2.9) -- cycle;

	\draw [black!100,line width = 0.5pt,dashed]  (8.5,0.3) -- (9.5,0.3) -- (9.5,2.9) -- node [text width=0.5cm, black, midway, above ]{$D_4$} (8.5,2.9) -- cycle;

	\draw [black!100,line width = 0.5pt,dashed]  (10,0.3) -- (14.1,0.3) -- (14.1,2.9) -- node [text width=0.5cm, black, midway, above ]{$D_2$} (10,2.9) -- cycle;

	\end{tikzpicture} 	    }
	\caption{A limiting example for the framework of Chakrabarty and Negahbani~\cite{DBLP:journals/talg/ChakrabartyN19}.}
	\label{fig:deeparnab_counterexample}
	\end{center}
\end{figure}


\begin{thebibliography}{10}
\providecommand{\url}[1]{\texttt{#1}}
\providecommand{\urlprefix}{URL }
\providecommand{\doi}[1]{https://doi.org/#1}

\bibitem{an_2017_lp-based}
An, H.C., Singh, M., Svensson, O.: {LP}-based algorithms for capacitated
  facility location. SIAM Journal on Computing  \textbf{46}(1),  272--306
  (2017)

\bibitem{DBLP:conf/icml/BackursIOSVW19}
Backurs, A., Indyk, P., Onak, K., Schieber, B., Vakilian, A., Wagner, T.:
  Scalable fair clustering. In: Proceedings of the 36th International
  Conference on Machine Learning ({ICML}). pp. 405--413 (2019)

\bibitem{DBLP:conf/esa/Bandyapadhyay0P19}
Bandyapadhyay, S., Inamdar, T., Pai, S., Varadarajan, K.R.: A constant
  approximation for {C}olorful $k$-{C}enter. In: Proceedings of the 27th Annual
  European Symposium on Algorithms {(ESA)}. pp. 12:1--12:14 (2019)

\bibitem{DBLP:conf/nips/BeraCFN19}
Bera, S.K., Chakrabarty, D., Flores, N., Negahbani, M.: Fair algorithms for
  clustering. In: Proceedings of the 33rd International Conference on Neural
  Information Processing Systems ({NeurIPS}). pp. 4955--4966 (2019)

\bibitem{DBLP:conf/approx/Bercea0KKRS019}
Bercea, I.O., Gro{\ss}, M., Khuller, S., Kumar, A., R{\"{o}}sner, C., Schmidt,
  D.R., Schmidt, M.: On the cost of essentially fair clusterings. In:
  Proceedings of the 22nd International Conference on Approximation Algorithms
  for Combinatorial Optimization Problems ({APPROX}). pp. 18:1--18:22 (2019)

\bibitem{DBLP:journals/jcss/CaiJ03}
Cai, L., Juedes, D.W.: On the existence of subexponential parameterized
  algorithms. Journal of Computer and System Sciences  \textbf{67}(4),
  789--807 (2003)

\bibitem{carr_2000_strengthening}
Carr, R.D., Fleischer, L.K., Leung, V.J., Phillips, C.A.: Strengthening
  integrality gaps for capacitated network design and covering problems. In:
  Proceedings of the 11th Annual ACM-SIAM Symposium on Discrete Algorithms
  (SODA). pp. 106--115 (2000)

\bibitem{DBLP:conf/icalp/ChakrabartyGK16}
Chakrabarty, D., Goyal, P., Krishnaswamy, R.: The non-uniform $k$-center
  problem. In: Proceedings of the 43rd International Colloquium on Automata,
  Languages, and Programming {(ICALP)}. pp. 67:1--67:15 (2016)

\bibitem{DBLP:journals/talg/ChakrabartyN19}
Chakrabarty, D., Negahbani, M.: Generalized center problems with outliers.
  {ACM} Transactions on Algorithms  \textbf{15}(3),  41:1--41:14 (2019)

\bibitem{DBLP:conf/soda/CharikarKMN01}
Charikar, M., Khuller, S., Mount, D.M., Narasimhan, G.: Algorithms for facility
  location problems with outliers. In: Proceedings of the 12th Annual Symposium
  on Discrete Algorithms ({SODA}). pp. 642--651 (2001)

\bibitem{DBLP:journals/algorithmica/ChenLLW16}
Chen, D.Z., Li, J., Liang, H., Wang, H.: Matroid and knapsack center problems.
  Algorithmica  \textbf{75}(1),  27--52 (2016)

\bibitem{DBLP:conf/nips/Chierichetti0LV17}
Chierichetti, F., Kumar, R., Lattanzi, S., Vassilvitskii, S.: Fair clustering
  through fairlets. In: Proceedings of the 31st International Conference on
  Neural Information Processing Systems (NIPS). pp. 5029--5037 (2017)

\bibitem{DBLP:conf/stoc/DinurS14}
Dinur, I., Steurer, D.: Analytical approach to parallel repetition. In:
  Proceedings of the 46th Annual Symposium on the Theory of Computing {(STOC)}.
  pp. 624--633 (2014)

\bibitem{Garey:1974:SNP:800119.803884}
Garey, M.R., Johnson, D.S., Stockmeyer, L.: Some simplified {NP}-complete
  problems. In: Proceedings of the 6th Annual ACM Symposium on Theory of
  Computing ({STOC}). pp. 47--63 (1974)

\bibitem{DBLP:journals/tcs/Gonzalez85}
Gonzalez, T.F.: Clustering to minimize the maximum intercluster distance.
  Theoretical Computer Science  \textbf{38},  293--306 (1985)

\bibitem{grandoni_2018_improved}
Grandoni, F., Kalaitzis, C., Zenklusen, R.: Improved approximation for tree
  augmentation: Saving by rewiring. In: Proceedings of the 50th ACM Symposium
  on Theory of Computing (STOC). pp. 632--645 (2018)

\bibitem{schrijver2012geometric}
Gr{\"o}tschel, M., Lov{\'a}sz, L., Schrijver, A.: Geometric algorithms and
  combinatorial optimization, vol.~2. Springer (2012)

\bibitem{HarrisPST19}
Harris, D.G., Pensyl, T., Srinivasan, A., Trinh, K.: A lottery model for
  center-type problems with outliers. {ACM} Transactions on Algorithms
  \textbf{15}(3),  36:1--36:25 (2019)

\bibitem{DBLP:journals/mor/HochbaumS85}
Hochbaum, D.S., Shmoys, D.B.: A best possible heuristic for the
  \emph{k}-{C}enter {P}roblem. Mathematics of Operations Research
  \textbf{10}(2),  180--184 (1985)

\bibitem{DBLP:journals/jacm/HochbaumS86}
Hochbaum, D.S., Shmoys, D.B.: A unified approach to approximation algorithms
  for bottleneck problems. Journal of the {ACM}  \textbf{33}(3),  533--550
  (1986)

\bibitem{DBLP:journals/dam/HsuN79}
Hsu, W., Nemhauser, G.L.: Easy and hard bottleneck location problems. Discrete
  Applied Mathematics  \textbf{1}(3),  209--215 (1979)

\bibitem{DBLP:conf/ipco/JiaSS20}
Jia, X., Sheth, K., Svensson, O.: Fair colorful k-center clustering. In:
  Proceedings of the 21st International Conference on Integer Programming and
  Combinatorial Optimization {(IPCO)}. pp. 209--222 (2020)

\bibitem{levi_2008_approximation}
Levi, R., Lodi, A., Sviridenko, M.: Approximation algorithms for the
  capacitated multi-item lot-sizing problem via flow-cover inequalities.
  Mathematics of Operations Research  \textbf{33}(2),  461--474 (2008)

\bibitem{li_2016_approximating}
Li, S.: Approximating {c}apacitated $k$-{m}edian with $(1+\epsilon)k$ open
  facilities. In: Proceedings of the 27th Annual ACM Symposium on Discrete
  Algorithms (SODA). pp. 786--796 (2016)

\bibitem{li_2017_uniform}
Li, S.: On uniform capacitated $k$-median beyond the natural {LP} relaxation.
  ACM Transactions on Algorithms  \textbf{13}(2),  22:1--22:18 (2017)

\bibitem{nutov_2017_tree}
Nutov, Z.: On the tree augmentation problem. In: Proceedings of the 25th Annual
  Symposium on Algorithms (ESA). pp. 61:1--61:14 (2017)

\bibitem{DBLP:conf/icalp/Rosner018}
R{\"{o}}sner, C., Schmidt, M.: Privacy preserving clustering with constraints.
  In: Proceedings of the 45th International Colloquium on Automata, Languages,
  and Programming ({ICALP}). pp. 96:1--96:14 (2018)

\end{thebibliography}
\end{document}